\theoremstyle{break}
\newtheorem{definition}{Definition}[section]
\newtheorem{theorem}[definition]{Theorem}
\newtheorem{lemma}[definition]{Lemma}
\newtheorem{corollary}[definition]{Corollary}
\newtheorem*{proof}{Proof}[section]
\providecommand{\bigsqcap}{%
  \mathop{%
    \mathpalette\@updown\bigsqcup
  }%
}
\newcommand*{\@updown}[2]{%
  \rotatebox[origin=c]{180}{$\m@th#1#2$}%
}
\newcommand{\jump}[1]{\ensuremath{[\![#1]\!]} }
\def\PP{\mathrm{PP}}
\def\TP{\mathrm{TP}}
\def\T{\mathrm{T}}
\title{An Intuitionistic Set-theoretical Model of the Extended Calculus of Constructions}
\author{Masahiro Sato}
\date{}
\begin{document}
\maketitle

%\newpage

%\tableofcontents

%\newpage

\begin{abstract}
  Werner's set-theoretical model is one of the most intuitive models of ECC.
  It combines a functional view of predicative universes with a collapsed view of the impredicative sort $\mathrm{Prop}$.
  However this model of $\mathrm{Prop}$ is so coarse that the principle of excluded middle $P \lor \neg P$ holds.
  In this paper, we interpret $\mathrm{Prop}$ into a topological space (a special case of Heyting algebra) to make it more intuitionistic without sacrificing simplicity.
  We prove soundness and show some applications of our model.
\end{abstract}

\section{Introduction}\label{introduction}
There are various models of type systems.
Werner's Set-theoretical model~\cite{SetsInTypes} provides an intuitive model of ECC.
It combines a functional view of predicative universes with a collapsed view of the impredicative sort Prop.
However this model of Prop is so coarse that the principle of excluded middle $P \lor \neg P$ holds in it.

In this paper, we construct a set-theoretical model of ECC in which the principle of excluded middle $P \lor \neg P$ doesn't hold, and thus closer to completeness.

ECC (the Extended Calculus of Constructions) extends CC with a hierarchy of predicative sorts $\mathrm{Type}_i$ and strong sums $\Sigma x : A.B$.
CC (the Calculus of Constructions~\cite{CC}) is a pure type system~\cite{pure_type_system} with two sorts, impredicative $\mathrm{Prop}$ and predicative $\mathrm{Type}$.

In~\cite{SetsInTypes}, Werner provides a remarkably simple model of ECC without strong sums.
In this model, $\lambda x : A.t$ is interpreted by a set-theoretical function for predicative sorts.
Yet such a simple approach is known to fail for impredicative sorts as it runs afoul of Reynolds' paradox~\cite{model_not_set}.
Therefore, the model for $\mathrm{Prop}$ is two-valued. %, and proofs are not distinguished.
Hence the principle of excluded middle $P \lor \neg P$ is valid in this model, making it classical.
This simple approach is to be contrasted with Luo's original model of ECC which uses $\omega$-sets~\cite{Luo}, or more recent models such as categorical models~\cite{CategoryType} or models based on homotopy theory~\cite{HoTT}.
This is the drawback of simplicity: while this approach avoids many complications of more precise models, it is at times counter-intuitive, as it completely ignores the intuitionistic aspect of CC.
Our goal has been to recover the intuitionistic part of CC without increasing the complexity of the model.
To do this, we interpret $\mathrm{Prop}$ into some topological space.
Topological spaces are instances of Heyting algebras.
Heyting algebras are used when constructing models of intuitionistic logic, but usually their elements are not sets.
In our model, proofs shall be interpreted as elements of denotations of propositions, hence these denotations must be sets.
Using topological spaces solves this problem.
Despite the fact that the interpretation of $\mathrm{Prop}$ is many valued, we avoid Reynolds' paradox by making the interpretation of proofs undistinguished.
Due to proof-irrelevance, this model still validates some propositions that are not provable, hence this model does not reach completeness yet.
However this is sufficient to exclude many classical propositions such as the principle of excluded middle $P \lor \neg P$.
Note that, to make the model coherent, we had to slightly restrict the type system.
We believe the scope is still sufficient to make this model practical, but hope to remove these restrictions in the future.

This model is parametrized by a topological space $(X, \mathcal{O}(X))$ and a point $p \in X$, which is called the {\em reference point}\footnote{
  Our proof of soundness requires this reference point to satisfy a condition, which is called the {\em point condition}.
}.
By replacing the parameters of the model, we can make it more or less precise.
For instance if its parameters are the topological space $(\{\cdot\}, \{\phi, \{\cdot\}\})$ and the reference point `$\cdot$',
we obtain a model of classical logic, which is the coarsest one.
It suffices to add one more point and shift the reference point to invalidate the principle of excluded middle.

In section \ref{typetheory}, we define the language of the type system ECC.
In section \ref{model}, we give our set-theoretical interpretation of ECC, and prove its soundness.
In section \ref{application}, we show some applications of this model.
For instance, we show that the excluded middle cannot be derived from the linearity axiom in ECC.
In section \ref{reynolds}, we analyze how we avoid Reynolds' paradox.

\section{Definition of ECC}\label{typetheory}
We define the type system $ECC$ as follows (omitting strong sums, as in~\cite{SetsInTypes}).
\begin{definition}[Term]\label{definition_of_term}
  \
  \begin{itemize}
  \item $x$ is a term for $x \in V$.
  \item If $t_1$ and $t_2$ are terms, then $t_1 t_2$ is a term.
  \item If $t$ and $T$ is are terms, and $x \in V$ then, $\lambda x : T. t$ is a term.
  \item If $T_1$ and $T_2$ are terms, and $x \in V$ then $\forall x : T_1. T_2$ is a term.
  \item $\mathrm{Prop}, \mathrm{Type}_i$ are terms ($i = 0,1,2,3,4,...$).
  \end{itemize}
\end{definition}
$\mathrm{Prop}$ and $\mathrm{Type}_i$ are called {\em sorts}.
$\mathrm{Prop}$ is called the impredicative sort and it represents the type of all propositions.
$\mathrm{Type}_0$ is named ``$\mathrm{Set}$'' in Coq.

\begin{definition}[Context]
  \ 
  \begin{itemize}
  \item $[]$ is a context.
  \item If $\Gamma$ is a context, and $T$ is a term and $x \in V$, then $\Gamma ; (x : T)$ is a context.
  \end{itemize}
\end{definition}

We show the typing rules of ECC in Table \ref{tab:typing_rule_of_ECC}.
They are standard, except that we restricted the PI-Type rule in the case $P : \mathrm{Prop}$ and $Q : \mathrm{Prop}$, and removed the subtyping rule from Prop to Type.
The unrestricted Prop-Prop PI-Type rule creates difficulties when building an intuitionistic model, and if we do not remove the subtyping rule it becomes possible to use the Prop-Type case of the PI-Type rule in place of the restricted Prop-Prop case, which would make the model incoherent.
We believe these restrictions are reasonable, as the proof component is seldom used in the PI-Type rule, with the notable exception of the generic statement of proof-irrelevance. Removing the subtyping between Prop and Type does not change the expressive power, as it is still possible to explicitly duplicate properties using Type to Prop.
We hope to solve these problems in the future, and allow the standard typing rules.

\begin{table}[h]
  \centering
  {\scriptsize
    {\renewcommand\arraystretch{3.0}
      \begin{tabular}{|ccc|} \hline
        $\Gamma \vdash  \mathrm{Prop} : \mathrm{Type}_i$ &
        $\Gamma \vdash  \mathrm{Type}_i : \mathrm{Type}_{i+1}$ &
        (Axiom)\\
        %$\cfrac{\Gamma \vdash  A : \mathrm{Prop}}{\Gamma \vdash  A : \mathrm{Type}_i}$ &%
        $\cfrac{\Gamma \vdash  A : \mathrm{Type}_i}{\Gamma \vdash  A : \mathrm{Type}_{i+1}}$ &&
        (Subtyping) \\
        $\cfrac{\Gamma \vdash  A : \mathrm{Type}_i \quad \Gamma ; (x : A) \vdash  B : \mathrm{Type}_j}{\Gamma \vdash  \forall x:A.B:\mathrm{Type}_{\mathrm{max}(i,j)}}$ &
        $\cfrac{\Gamma \vdash  A : \mathrm{Prop} \quad \Gamma ; (x : A) \vdash  B : \mathrm{Type}_j}{\Gamma \vdash  \forall x:A.B.\mathrm{Type}_j}$ 
        & \multirow{2}{*}{(PI-Type)}\\
        
        $\cfrac{\Gamma \vdash  A : \mathrm{Type}_i \quad \Gamma;(x : A) \vdash  Q : \mathrm{Prop}}{\Gamma \vdash  \forall x:A.Q:\mathrm{Prop}}$ &  
        \multicolumn{2}{c|}{$\cfrac{\Gamma \vdash  P : \mathrm{Prop} \quad \Gamma \vdash  Q:\mathrm{Prop} \quad x \mbox{ does not appear in } Q}{\Gamma \vdash  \forall x:P.Q : \mathrm{Prop}}$}
        \\
        $\cfrac{\Gamma ; (x : A) \vdash  t : B \quad \Gamma \vdash  \forall x:A.B: \mathrm{Type_i}}{\Gamma \vdash  \lambda x:A.t : \forall x:A.B}$ &
        $\cfrac{\Gamma ; (x : A) \vdash t : B \quad \Gamma \vdash \forall x:A.B: \mathrm{Prop}}{\Gamma \vdash \lambda x:A.t : \forall x:A.B}$ &
        (Abstract) \\
        $\cfrac{\Gamma \vdash  u : \forall x:A.B \quad \Gamma \vdash  v : A \quad}{\Gamma \vdash  (u v) : B[x \backslash v]}$  &
        &
        (Apply) \\
        $\cfrac{(x : A) \in \Gamma \quad \Gamma \vdash A : \mathrm{Type}_i}{\Gamma \vdash x:A}$ &
        $\cfrac{(x : A) \in \Gamma \quad \Gamma \vdash A : \mathrm{Prop}}{\Gamma \vdash x:A}$ &
        (Variable) \\
        $\cfrac{\Gamma \vdash x : A \quad A =_{\beta} B}{\Gamma \vdash x : B}$ &
        &
        (Beta Equality) \\
        
        \hline
      \end{tabular}
    }
  }
  \caption{Typing Rule of ECC}
  \label{tab:typing_rule_of_ECC}
\end{table}

In Table \ref{tab:typing_rule_of_ECC}, $=_{\beta}$ denotes {\em beta equality} and $B[x \backslash v]$ denotes substitution.
They are defined in Definitions \ref{substitution_ecc} and \ref{beta_ecc} below.

\begin{definition}[Substitution]\label{substitution_ecc}
  Let $t$ and $v$ be terms and $x$ be a variable.
  The substitution $t[x \backslash v]$, which means $v$ replaces $x$ in $t$, is defined inductively as follows:
  \begin{enumerate}[{(}i{)}]
  \item If $y$ is a variable, then $y[x \backslash v] = \begin{cases}v & (y=x) \\ x & (otherwise),\end{cases}$
  \item $(t_1 t_2)[x \backslash v] = (t_1[x \backslash v]) (t_2[x \backslash v])$,
  \item $(\lambda x' : T.t')[x \backslash v] = \lambda x' : (T[x \backslash v]). t'[x \backslash v] \quad (\mbox{when $x \neq x'$})$,
  \item $(\forall x' : T_1 . T_2)[x \backslash v] = \forall x' : (T_1[x \backslash v]) . (T_2[x \backslash v])$,
  \item $(\mathrm{Prop})[x \backslash v] = \mathrm{Prop}$,
  \item $(\mathrm{Type}_i)[x \backslash v] = \mathrm{Type}_i \quad (i=1,2,3,...)$.
  \end{enumerate}
\end{definition}

\begin{definition}[Beta Equality]\label{beta_ecc}
  Let $=_{\beta}$ be the smallest equivalence relation such that following conditions hold.
  \begin{enumerate}[{(}i{)}]
  \item $(\lambda x : A. t) \; a =_{\beta} t[x \backslash a]$.
  \item If $t_1 =_{\beta} t_1'$ and $t_2 =_{\beta} t_2'$ then $t_1 t_2 =_{\beta} t_1' t_2'$.
  \item If $t =_{\beta} t'$ and $A =_{\beta} A'$ then $\lambda x : A. t =_{\beta} \lambda x : A' t'$.
  \item If $A =_{\beta} A'$ and $B =_{\beta} B'$ then $\forall x : A. B =_{\beta} \forall x : A' B'$.
  \end{enumerate}
\end{definition}

%In this type system, the new sort $\mathrm{Prop}$ is introduced.
%It represents the set of propositions.
In ECC, propositions are types which belong to the impredicative sort $\mathrm{Prop}$,
and proofs are terms of types which represent propositions.
Next, we give a definition of proposition and proof as follows.
\begin{definition}
  \
  \begin{enumerate}
  \item Propositional Term \\
    The term $P$ is called a propositional term for $\Gamma$ iff $\Gamma \vdash P : \mathrm{Prop}$ is derivable.
  \item Proof Term \\
    The term $p$ is called a proof term for $\Gamma$ iff $\Gamma \vdash p : P$ is derivable for some $P$ which is a propositional term for $\Gamma$.
  \item Provable Propositional Term \\
    The term $P$ is called a provable propositional term for $\Gamma$ iff $P$ is a propositional term for $\Gamma$ and there exists $p$ such that $\Gamma \vdash p : P$ is derivable.
  \end{enumerate}
\end{definition}

Proof terms and propositional terms are preserved under substitution.
The following lemma expresses this fact.

%Next, we introduce the following lemma.

\begin{lemma}\label{substitution_in_proof}
  The following statements are equivalent.
  \begin{itemize}
  \item $p$ is a proof(resp. propositional) term for the context $\Gamma ; (x : U) ; \Delta$.
  \item $p[x \backslash u]$ is a proof(resp. propositional) term for the context $\Gamma; \Delta[x \backslash u]$.
  \end{itemize}
\end{lemma}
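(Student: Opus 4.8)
The plan is to reduce both implications to a single \emph{substitution lemma for full typing judgments}: under the standing assumption $\Gamma \vdash u : U$ (implicit in the statement, and needed for the substitution to respect typing), I would show that $\Gamma ; (x:U) ; \Delta \vdash t : T$ is derivable if and only if $\Gamma ; \Delta[x\backslash u] \vdash t[x\backslash u] : T[x\backslash u]$ is derivable, for \emph{arbitrary} terms $t, T$. Once this is available, the lemma follows by specialization. A propositional term is exactly a $t$ whose type is the sort $\mathrm{Prop}$, and since $\mathrm{Prop}[x\backslash u] = \mathrm{Prop}$ by Definition \ref{substitution_ecc}(v), the condition ``$T = \mathrm{Prop}$'' is literally invariant under substitution on the type; a proof term is exactly a $t$ whose type $P$ is itself a propositional term, and both conditions then transport along the equivalence simultaneously. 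So the classification into proof versus propositional term is preserved precisely because substitution fixes $\mathrm{Prop}$.

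First I would set up the two auxiliary facts that every inductive case needs: a \emph{substitution--commutation} identity $T[x'\backslash v][x\backslash u] = (T[x\backslash u])[x'\backslash (v[x\backslash u])]$ for $x \neq x'$ and $x' \notin \mathrm{FV}(u)$, proved by induction on $T$ straight from Definition \ref{substitution_ecc}; and the compatibility of $=_\beta$ with substitution, namely $A =_\beta B \Rightarrow A[x\backslash u] =_\beta B[x\backslash u]$, which follows from the congruence clauses of Definition \ref{beta_ecc}. The forward implication is then the standard preservation argument: induct on the derivation of $\Gamma ; (x:U) ; \Delta \vdash t : T$ and rewrite each rule of Table \ref{tab:typing_rule_of_ECC} under $[x\backslash u]$. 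The only delicate rules are (Apply), where the substitution appearing in $B[x'\backslash v]$ must be reconciled with $[x\backslash u]$ through the commutation identity; (Beta Equality), which consumes the $=_\beta$-compatibility fact; and (Variable) in the case $t = x$, where the required conclusion $\Gamma ; \Delta[x\backslash u] \vdash u : U[x\backslash u]$ is supplied by the hypothesis $\Gamma \vdash u : U$ together with weakening along $\Delta[x\backslash u]$.

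The backward implication is where the real work lies, and I expect it to be the main obstacle. Here I would argue by induction on the \emph{structure of $p$} rather than on the derivation of the already-substituted term, performing inversion on the given typing of $p[x\backslash u]$ to reconstruct a typing of $p$ in the richer context $\Gamma ; (x:U) ; \Delta$. The genuinely subtle point is that I must recover not merely \emph{some} typing of $p$, but one whose type lands in the correct layer of the sort hierarchy, so that ``proof'' is reflected to ``proof'' and ``proposition'' to ``proposition''. This rests on the rigidity of the hierarchy under substitution---$[x\backslash u]$ can never turn a $\mathrm{Type}_i$ into $\mathrm{Prop}$ or conversely, since their $\beta$-normal forms are distinct sorts---so that the sort read off from $p[x\backslash u]$ forces the sort that must have been assigned to $p$. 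The variable case $p = x$ is characteristic: $x[x\backslash u] = u$ is a proof term precisely when $U$ has type $\mathrm{Prop}$, which, appealing to the fact that the type of $u$ is determined up to $=_\beta$, is exactly the condition for $x$ to be a proof term under the (Variable) rule, so the two classifications coincide. The remaining cases (application, abstraction, $\forall$, and the sorts) reuse the commutation identity and the induction hypotheses, with the $=_\beta$ steps handled symmetrically to the forward direction.
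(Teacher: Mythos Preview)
Your approach is essentially the same as the paper's: you reduce the statement to a bidirectional substitution lemma for typing judgments (the paper's Lemma~\ref{substitution_lemma}), and then observe that the classification into proof/propositional term is preserved because $\mathrm{Prop}[x\backslash u] = \mathrm{Prop}$. The paper simply states Lemma~\ref{substitution_lemma} and cites~\cite{not_simple} for its proof, whereas you go on to sketch that proof in some detail, correctly flagging the backward direction as the nontrivial one; but the reduction itself is identical.
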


This lemma is consequence of the following proposition.
\begin{lemma}\label{substitution_lemma}
  If $\Gamma \vdash u : U$ is derivable, then
  $\Gamma ; (x : U) ; \Delta \vdash t : T$ is derivable if and only if $\Gamma ; \Delta[x \backslash u] \vdash t[x \backslash u] : T[x \backslash u]$ is derivable.
\end{lemma}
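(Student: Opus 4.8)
The plan is to prove Lemma~\ref{substitution_lemma} by induction on the derivation of the typing judgment $\Gamma ; (x : U) ; \Delta \vdash t : T$ for the forward direction, and symmetrically (or by a converse induction) for the backward direction. Since the statement is an ``if and only if'' parametrized by an arbitrary tail context $\Delta$, the induction hypothesis must be stated with $\Delta$ universally quantified, so that when we descend into the premises of rules that extend the context (such as PI-Type, Abstract, and Variable), we may instantiate $\Delta$ with the extended tail $\Delta ; (y : A)$. Before starting, I would record the standard substitution bookkeeping facts: that substitution commutes with the term constructors as in Definition~\ref{substitution_ecc}, that sorts are fixed by substitution, and the crucial compositionality identity relating $T[x \backslash v][\cdots]$ to nested substitutions, which is needed to make the Apply rule's conclusion $B[y \backslash v]$ transform correctly under $[x \backslash u]$.

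First I would dispatch the base cases. For the Axiom rules, substitution leaves $\mathrm{Prop}$, $\mathrm{Type}_i$, and $\mathrm{Type}_{i+1}$ unchanged, so the conclusion is immediate. For the Variable rule, there are two sub-cases depending on whether the looked-up variable is $x$ itself or some other variable $y$: if $t = x$, then $t[x \backslash u] = u$ and $T = U$, and I must invoke the hypothesis $\Gamma \vdash u : U$ together with the fact that $x$ does not occur free in $U$ (so $U[x \backslash u] = U$) to conclude $\Gamma ; \Delta[x \backslash u] \vdash u : U$, after weakening $U$'s derivation along $\Delta[x \backslash u]$; if $t = y \neq x$, the variable is still in the substituted context and I just propagate. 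The Beta Equality rule needs the observation that $=_\beta$ is stable under substitution, i.e.\ $A =_\beta B$ implies $A[x \backslash u] =_\beta B[x \backslash u]$, which follows from Definition~\ref{beta_ecc} by a routine induction.

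For the inductive step I would proceed rule by rule. The context-extending rules are the interesting ones: in the PI-Type and Abstract cases the second premise is typed in $\Gamma ; (x : U) ; \Delta ; (y : A)$, so I apply the induction hypothesis with the tail context $\Delta ; (y : A)$, push the substitution through $\forall y : A.B$ and $\lambda y : A.t$ using clauses (iii) and (iv) of Definition~\ref{substitution_ecc}, and reassemble the judgment; the sort annotations ($\mathrm{Type}_{\max(i,j)}$, $\mathrm{Prop}$, etc.) are unaffected by substitution. The Apply rule is where I expect the main obstacle: its conclusion is $B[y \backslash v]$, and after applying $[x \backslash u]$ I obtain $B[y \backslash v][x \backslash u]$, whereas the induction hypotheses give me types $(\forall y : A.B)[x \backslash u] = \forall y : A[x \backslash u].B[x \backslash u]$ and $A[x \backslash u]$, from which the Apply rule produces $B[x \backslash u][y \backslash v[x \backslash u]]$. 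Reconciling these two forms requires the substitution-commutation lemma $B[y \backslash v][x \backslash u] = B[x \backslash u][y \backslash v[x \backslash u]]$ (valid because $x \neq y$ and $x$ is not free in $v$ in the relevant scope), so the crux is establishing exactly this identity and verifying its side conditions on variable occurrences.

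For the backward direction, one option is a second induction on the derivation of $\Gamma ; \Delta[x \backslash u] \vdash t[x \backslash u] : T[x \backslash u]$, but this is delicate because the substituted term may have a different derivation shape than the original. A cleaner route is to derive it from the forward direction: given a derivation of the substituted judgment, I would use the fact that $\Gamma \vdash u : U$ together with a weakening/thinning lemma to re-insert the hypothesis $(x : U)$, and note that on terms $t$ and types $T$ already in the image of $[x \backslash u]$ the substitution acts as identity where $x$ does not occur, so the original unsubstituted judgment can be recovered. Consequently Lemma~\ref{substitution_in_proof} follows immediately by specializing $T$ to a sort: $p$ is a proof term exactly when its type is a propositional term, and both conditions are transported faithfully across the equivalence, so no extra work is needed beyond unfolding the definitions.
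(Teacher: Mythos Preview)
The paper does not actually give a proof of this lemma: it simply states that it ``can be proved in the same way as in~\cite{not_simple}'' and moves on. So there is very little to compare against, and your forward-direction outline---induction on the derivation, with $\Delta$ universally quantified so it can absorb the extra binder in the PI-Type/Abstract premises, plus the standard substitution-commutation identity $B[y\backslash v][x\backslash u] = B[x\backslash u]\bigl[y\backslash v[x\backslash u]\bigr]$ for the Apply case---is exactly the standard argument one finds in the PTS literature and is the approach the citation points to.

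The genuine gap is in your treatment of the backward direction. Your plan is to take the derivation of $\Gamma;\Delta[x\backslash u]\vdash t[x\backslash u]:T[x\backslash u]$, weaken it by re-inserting $(x:U)$, and then ``recover'' the unsubstituted judgment because $x$ does not occur in $t[x\backslash u]$. But weakening only gives you $\Gamma;(x:U);\Delta[x\backslash u]\vdash t[x\backslash u]:T[x\backslash u]$, which still speaks about the substituted term, type, and tail context; there is no general way to pull the substitution back out and obtain $\Gamma;(x:U);\Delta\vdash t:T$. In fact the backward implication, read literally, is false in this system. Take $\Gamma=\Delta=[\,]$, $U=\mathrm{Type}_1$, $u=\mathrm{Prop}$, $t=x$, $T=\mathrm{Type}_0$. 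Then $t[x\backslash u]=\mathrm{Prop}$ and $T[x\backslash u]=\mathrm{Type}_0$, and $\vdash\mathrm{Prop}:\mathrm{Type}_0$ holds by Axiom; yet $(x:\mathrm{Type}_1)\vdash x:\mathrm{Type}_0$ is not derivable, since from the Variable rule $x$ has type $\mathrm{Type}_1$, Subtyping only moves upward through the $\mathrm{Type}_i$, and Beta~Equality cannot relate $\mathrm{Type}_0$ and $\mathrm{Type}_1$.

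So the ``only if'' in the lemma as stated is too strong. What the paper actually needs (for Lemma~\ref{substitution_in_proof} and hence for the interpretation) is the weaker equivalence at the level of being a proof/propositional term, not the pointwise equivalence of arbitrary typing judgments. If you want to salvage the argument, you should aim directly at that weaker statement: show that the \emph{class} of $t$ (proof term, propositional term, or neither) is preserved in both directions under substitution, which can be argued without ever reconstructing the exact unsubstituted derivation. Your weakening idea does not do this, and as written the backward half of your proposal cannot be completed.
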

Lemma \ref{substitution_lemma} can be proved in the same way as in~\cite{not_simple}.

Lastly, here are some notations allowing to use other logical symbols~\cite{lambda_type}.
\begin{definition}\label{logical_symbol}
  \begin{eqnarray*}
    A \rightarrow B &:=& \forall x : A. B \quad \text{(when `$x$' does not occur freely in `$B$')}, \\
    \bot &:=& \forall P : \mathrm{Prop}. P, \\
    \neg A &:=& A \rightarrow \bot, \\
    A \land B &:=& \forall P : \mathrm{Prop}. (A \rightarrow B \rightarrow P) \rightarrow P, \\
    A \lor B &:=& \forall P : \mathrm{Prop}. (A \rightarrow P) \rightarrow (B \rightarrow P) \rightarrow P, \\
    \exists x:A.Q &:=& \forall P : \mathrm{Prop}. (\forall x : A. (Q \rightarrow P)) \rightarrow P, \\
    A \leftrightarrow B &:=& (A \rightarrow B) \land (B \rightarrow A), \\
    x =_A y &:=& \forall Q : (A \rightarrow \mathrm{Prop}). Q \; x \leftrightarrow Q \; y.
  \end{eqnarray*}
\end{definition}

\section{Interpretation}\label{model}
\subsection{Lattice}
Several interpretations of type theory have been proposed such as using $\omega$-sets~\cite{Luo} or coherent spaces~\cite{coherent}.
In this paper, we use {\em Heyting algebras}~\cite{MacLane, IntuitionisticLogic} for propositions.
Heyting algebras provide models of intuitionistic logic.
Topological spaces form Heyting algebras, and as such provide models of intuitionistic logic too~\cite{IntuitionisticLogic}.
We give a definition of lattice and Heyting algebra as follows.
\begin{definition}[Lattice]
  Let $(A,\le)$ be a partial order set(i.e. reflexivity, antisymmetry, and transitivity).
  $(A,\le)$ is called Lattice when any two elements $a$ and $b$ of $A$ have a supremum `$a \sqcup b$' and infimum `$a \sqcap b$', which are called join and meet\footnote{
    We use the lattice operation symbols join `$\sqcup$' and meet `$\sqcap$' instead of `$\lor$' and `$\land$', since we use these in another way in this paper.
  }.
  A lattice is also called $complete$ $lattice$ if every subset $S$ of $A$ has supremum `$\bigsqcup S$' and infimun `$\bigsqcap S$'.
  If a lattice has an $exponential$ $operator$ $a^b$ such that
  \begin{equation*}
    x \le z^y \Leftrightarrow x \sqcap y \le z
  \end{equation*}
  holds, then we call it Heyting Algebra.
\end{definition}
The following lemma show that complete lattice is stronger than Heyting algebra.
\begin{lemma}
  If $(A, \le)$ is a complete lattice, then this is also a Heyting algebra.
\end{lemma}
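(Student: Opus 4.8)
My plan is to build the exponential operator directly from completeness and then verify the defining adjunction $x \le z^y \Leftrightarrow x \sqcap y \le z$. Because $(A,\le)$ is complete, every subset has a join, so I would define
\begin{equation*}
  z^y := \bigsqcup \{\, w \in A : w \sqcap y \le z \,\}.
\end{equation*}
This is the natural, and essentially forced, candidate: the equivalence demands that $z^y$ be the largest element whose meet with $y$ stays below $z$, and in a complete lattice the canonical way to manufacture such a largest element is to collect all candidates and join them. The entire content of the lemma is then whether this particular join actually witnesses the adjunction.

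Set $S := \{\, w : w \sqcap y \le z \,\}$. The backward implication is immediate and uses only completeness: if $x \sqcap y \le z$, then $x \in S$, hence $x \le \bigsqcup S = z^y$ since a join dominates every element of $S$. The forward implication carries all the weight. Assuming $x \le z^y$, monotonicity of meet gives $x \sqcap y \le z^y \sqcap y$, so it suffices to prove $z^y \sqcap y \le z$, that is
\begin{equation*}
  \Bigl( \bigsqcup_{w \in S} w \Bigr) \sqcap y \;\le\; z .
\end{equation*}
The only natural continuation is to distribute the meet across the join, $\bigl(\bigsqcup_{w \in S} w\bigr) \sqcap y = \bigsqcup_{w \in S}(w \sqcap y)$; granting this, every term obeys $w \sqcap y \le z$ by definition of $S$, and a join of elements all below $z$ is below $z$, which closes the proof.

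The decisive step, and the one where I expect the argument to stall in full generality, is exactly this infinite distributive law, which is not a formal consequence of completeness. In fact the existence of an element $z^y$ satisfying the stated equivalence is equivalent to the map $x \mapsto x \sqcap y$ preserving arbitrary joins, which is the very identity just invoked; so the distributive law is not an auxiliary convenience but is logically equivalent to the conclusion. Arbitrary complete lattices need not satisfy it: in the five-element diamond $M_3$, taking $y$ to be a middle element and $z$ the bottom, the set $S$ consists of the bottom together with the two other middle elements, whose join is the top, and meeting the top with $y$ returns $y \not\le z$; thus no exponential exists there. The way forward is therefore to supply the missing hypothesis, namely the infinite distributive law $\bigl(\bigsqcup_{w \in S} w\bigr) \sqcap y = \bigsqcup_{w \in S}(w \sqcap y)$, under which the construction and both verifications above go through unchanged; without it the lemma as worded fails.
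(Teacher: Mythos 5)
Your construction is exactly the paper's: the paper's entire proof consists of the single definition $y^x := \bigsqcup\{t \mid t \sqcap x \le y\}$, with no verification of the adjunction at all. You go further, and your analysis is correct on every point. The backward implication does use only completeness; the forward implication does require the infinite distributive law $\bigl(\bigsqcup_{w \in S} w\bigr) \sqcap y = \bigsqcup_{w \in S}(w \sqcap y)$, which is not a consequence of completeness but is (by the poset adjoint functor theorem, as you note) equivalent to the existence of the exponential; and your $M_3$ counterexample is correct: with $y$ a middle element $a$ and $z$ the bottom, $S = \{0,b,c\}$ has join the top, and the top meets $a$ in $a \not\le 0$, so $M_3$ is a complete lattice that is not a Heyting algebra. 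In other words, the lemma as stated in the paper is false, and what you have found is a genuine gap in the paper, not in your own argument; the correct statement is that a complete lattice satisfying the frame law (binary meets distribute over arbitrary joins) is a complete Heyting algebra, which is what your repaired proof establishes.

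The error is harmless for the rest of the paper, and it is worth seeing why. The lemma is only ever applied to the open-set lattice $\mathcal{O}(X)$ of a topological space (the next lemma), where binary meet is set intersection and arbitrary join is union, so $U \cap \bigcup_{i} V_i = \bigcup_{i} (U \cap V_i)$ holds set-theoretically and your verification goes through verbatim. Note also that only \emph{binary} meets need to distribute: the infinitary meet of $\mathcal{O}(X)$, which is the interior of the intersection and does not commute with the union, plays no role in the adjunction. So the right fix is to add the frame law as a hypothesis (or to state the lemma directly for $\mathcal{O}(X)$), precisely as you conclude.
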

\begin{proof}
  Since complete heyting algebra, we can define the exponential operator.
  \begin{equation*}
    y^x := \bigsqcup \{t \; | \; t \sqcap x \le y\}.
  \end{equation*}
\end{proof}

\begin{lemma}
  For any set $X$, the topological space $(X, \mathcal{O}(X))$ is a Heyting algebra, moreover it is a complete lattice.
\end{lemma}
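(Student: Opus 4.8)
The plan is to establish the complete-lattice structure directly and then invoke the preceding lemma to obtain the Heyting algebra. First I would fix the order: on $\mathcal{O}(X)$ take $\le$ to be set inclusion $\subseteq$, which is reflexive, antisymmetric and transitive, so $(\mathcal{O}(X), \subseteq)$ is a partial order set. Its top element is $X$ and its bottom element is $\emptyset$, both open by the axioms of a topology.

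Next I would produce arbitrary joins. For a subset $S \subseteq \mathcal{O}(X)$, set $\bigsqcup S := \bigcup S$. The topology axiom that an arbitrary union of open sets is open guarantees $\bigcup S \in \mathcal{O}(X)$ (for $S = \emptyset$ this yields $\emptyset$), and $\bigcup S$ is plainly the least upper bound with respect to $\subseteq$: it contains every member of $S$, and any open set containing all members of $S$ contains their union.

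The main obstacle is the meet, because the naive candidate, the intersection $\bigcap S$, need not be open, so it is in general not an element of $\mathcal{O}(X)$. The cleanest way around this is the standard lattice-theoretic fact that a poset in which all suprema exist automatically has all infima. Concretely, given $S$, let $L := \{V \in \mathcal{O}(X) \mid V \subseteq U \text{ for all } U \in S\}$ be the set of open lower bounds of $S$, and define $\bigsqcap S := \bigsqcup L$, which exists by the previous step. I would then verify that $\bigsqcap S$ is the greatest lower bound: each $U \in S$ is an upper bound of $L$, hence $\bigsqcup L \subseteq U$, showing $\bigsqcup L$ is itself a lower bound of $S$; and any open lower bound of $S$ lies in $L$, so it is contained in $\bigsqcup L$. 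Explicitly this meet is the interior $\mathrm{int}(\bigcap S)$, the largest open set contained in every member of $S$.

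Having exhibited suprema and infima for every subset, $(\mathcal{O}(X), \subseteq)$ is a complete lattice. By the preceding lemma, every complete lattice is a Heyting algebra, with exponential $U^V = \bigsqcup\{W \in \mathcal{O}(X) \mid W \cap V \subseteq U\}$; hence $(X, \mathcal{O}(X))$ is a Heyting algebra as well, which completes the proof.
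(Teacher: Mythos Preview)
Your proof is correct and follows essentially the same route as the paper: order by inclusion, top $X$, bottom $\emptyset$, $\bigsqcup S = \bigcup S$, $\bigsqcap S = \bigsqcup\{t \mid \forall s \in S,\, t \le s\} = \mathrm{int}\bigl(\bigcap S\bigr)$, and $b^a = \bigsqcup\{t \mid t \sqcap a \le b\}$. The paper merely lists these definitions without verification, whereas you actually check that the join and meet so defined are the supremum and infimum and then invoke the preceding lemma for the Heyting structure; the content is the same.
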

\begin{proof}
  In fact let $a \le b$ be $a \subset b$, and define each operation as follows:
  \begin{eqnarray*}
    \mathbb{I} &:=& X, \\
    \mathbb{O} &:=& \phi, \\
    \bigsqcup S &:=& \bigcup S, \\
    \bigsqcap S &:=& \bigsqcup\{t \mid \forall s \in S, t \leq s\} = \biggl(\bigcap S \biggr)^\circ \quad (where \; A^\circ \; is \; the \; interior \; of \; A), \\
    b^a &:=& \bigsqcup\{t \mid t \sqcap a \leq b\}.
  \end{eqnarray*}
\end{proof}

The following lemma states well known properties of complete Heyting algebras.
\begin{lemma}\label{heyting_conditions}
  Let $(A,\leq)$ be a complete Heyting algebra. Then the following conditions hold.
  \begin{eqnarray}
    (x^b)^a &=& x^{a \sqcap b}, \label{eq:powerprod}\\
    \bigsqcap \{t^{t^a} \; | \; t \in A\} &=& a, \label{eq:meetpower}\\
    x^a \sqcap x^b &=& x^{a \sqcup b}, \label{eq:prodpoweror}\\
    \bigsqcap \{a^t \; | \; t \in S\} &=& a^{\bigsqcup S}, \label{eq:meetpoweror}\\
    \bigsqcap \phi &=& 1, \label{eq:whole}\\
    x &\leq& x^y, \label{eq:powerle}\\
    x^y \sqcap y ^ x = 1 &\Rightarrow& x = y, \label{eq:power_eqcond}\\
    \bigsqcap S = 1 &\Rightarrow& \forall a \in S, a = 1. \label{eq:topcond}
  \end{eqnarray}
\end{lemma}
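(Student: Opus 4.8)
The plan is to deduce every identity from the single defining adjunction $x\le z^y \Leftrightarrow x\sqcap y\le z$, using two standard consequences repeatedly. The first is the \emph{extensionality principle}: in a poset $u=v$ as soon as $t\le u \Leftrightarrow t\le v$ holds for all $t$ (immediate from reflexivity and antisymmetry, taking $t=u$ and $t=v$). The second is the \emph{infinite distributive law} $t\sqcap\bigsqcup S = \bigsqcup\{t\sqcap s \mid s\in S\}$, which holds because the adjunction exhibits $(-)\sqcap t$ as a left adjoint of $(-)^t$, and left adjoints preserve all joins. Before starting I would record the elementary facts I reuse: $z^y\sqcap y\le z$ (instantiate $z^y\le z^y$ in the adjunction), $x^x=1$ and $x^1=x$ (direct from $x^y=\bigsqcup\{t\mid t\sqcap y\le x\}$), and that $(-)^y$ is antitone while $z^{(-)}$ is monotone in the base.

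The short identities come out immediately. \eqref{eq:powerle} is just $x\sqcap y\le x$ fed through the adjunction. \eqref{eq:whole} and \eqref{eq:topcond} follow from the definition of $\bigsqcap$ as a greatest lower bound: for $\phi$ the defining set is all of $A$, whose join is $1$; and if $\bigsqcap S=1$ then $1\le a$ for each $a\in S$. For \eqref{eq:power_eqcond}, note $x^y\sqcap y^x=1$ forces $x^y=1$ and $y^x=1$ (each factor lies above the meet, hence equals $1$); then $1\le x^y$ unwinds via the adjunction to $y\le x$, and symmetrically $x\le y$, so antisymmetry gives $x=y$.

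The currying identity \eqref{eq:powerprod} I would prove by the extensionality principle: for arbitrary $t$, the chain $t\le (x^b)^a \Leftrightarrow t\sqcap a\le x^b \Leftrightarrow (t\sqcap a)\sqcap b\le x \Leftrightarrow t\sqcap(a\sqcap b)\le x \Leftrightarrow t\le x^{a\sqcap b}$ uses only the adjunction and associativity of $\sqcap$. The two ``mixed'' identities \eqref{eq:prodpoweror} and \eqref{eq:meetpoweror} are where distributivity enters; in fact \eqref{eq:prodpoweror} is the two-element instance of \eqref{eq:meetpoweror} with base $x$ and $S=\{a,b\}$, so it suffices to treat the latter. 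For \eqref{eq:meetpoweror} I compute, for all $u$, that $u\le a^{\bigsqcup S} \Leftrightarrow u\sqcap\bigsqcup S\le a \Leftrightarrow \bigsqcup\{u\sqcap t\mid t\in S\}\le a \Leftrightarrow (\forall t\in S)\,u\sqcap t\le a \Leftrightarrow (\forall t\in S)\,u\le a^t \Leftrightarrow u\le\bigsqcap\{a^t\mid t\in S\}$, again closing by extensionality.

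The remaining identity \eqref{eq:meetpower} is the one I expect to be the real obstacle, since it mixes a nested exponential with an infinite meet. I would argue by antisymmetry in two steps. For $\bigsqcap\{t^{t^a}\mid t\in A\}\le a$ it suffices to exhibit $a$ as one member of the family: taking $t=a$ gives $a^a=1$ and then $a^{a^a}=a^1=a$, so $a$ itself occurs and the meet cannot exceed it. For the reverse inequality $a\le\bigsqcap\{t^{t^a}\mid t\in A\}$ I must show $a\le t^{t^a}$ for every $t$; by the adjunction this is $a\sqcap t^a\le t$, which is exactly the modus-ponens fact $t^a\sqcap a\le t$ recorded at the outset. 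Combining the two inequalities yields the claim. The only delicate points in the whole proof are justifying the infinite distributive law abstractly (rather than reading it off the topological model) and keeping the base/exponent bookkeeping straight in \eqref{eq:meetpower}, where the same letter $t$ appears in both positions.
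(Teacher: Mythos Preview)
Your proof is correct and complete. The paper itself does not prove this lemma at all: it is introduced only as ``well known properties of complete Heyting algebras'' and then used without justification. So there is nothing to compare against; you have supplied exactly the standard verification that the paper omits. All eight items are handled cleanly, and in particular your treatment of \eqref{eq:meetpower}---splitting into the two inequalities, exhibiting $a$ as the $t=a$ term of the family for one direction and reducing the other to the modus-ponens inequality $t^a\sqcap a\le t$---is the right argument. Your derivation of the infinite distributive law from the adjunction (so that \eqref{eq:meetpoweror} goes through) is also the correct abstract justification, rather than appealing to the concrete topological description in the paper.

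One tiny slip, not affecting anything: in your preliminary remarks you write that ``$(-)^y$ is antitone while $z^{(-)}$ is monotone in the base.'' With the paper's convention $x\le z^y\Leftrightarrow x\sqcap y\le z$, it is the other way around: $(-)^y$ (varying the base) is monotone, and $z^{(-)}$ (varying the exponent) is antitone. You never actually invoke these monotonicity facts in the body of the proof, so the error is cosmetic.
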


\subsection{Preparation of interpretation}
Let $p$, which is called the {\em reference point}, be some point of the topological space $(X,\mathcal{O}(X))$ such that the following condition
\begin{equation*}
  \bigcap \mathcal{U}(p) \; is \; an \; open \; set
\end{equation*}
hold where $\mathcal{U}(p)$ is an open neighborhood\footnote{An open neighborhood of $p$ is a set of open sets containing the point $p$} of p.
We will parametrize our model with $\mathcal{O}(X)$ and $p$.
Let us call this condition the {\em point condition}. It becomes necessary when proving soundness.

\begin{definition}[Dependent Function]
  Let $A$ be a set, and $B(a)$ be a set with parameter $a \in A$.
  We define dependent function domain as follows
   \begin{equation*}
     \prod_{a \in A}B(a) := \{f \subset \coprod_{a \in A}B(a) \; | \; \forall a \in A, \exists ! b \in B(a), (a, b) \in f\}
   \end{equation*}
   that is functions whose graph belongs to
   \begin{equation*}
     \coprod_{a \in A} B(a) := \{(x,y) \in A \times \bigcup_{a \in A}B(a) \; | \; y \in B(x) \}.
  \end{equation*}
\end{definition}

The function $PT$ called {\em Product Type} is defined as follow.
\begin{definition}[Product Type]
  \begin{equation*}
    PT_{\Gamma, x}(A,B) := 
    \begin{cases}
      \PP & (A \mbox{ is a propositional term for }  \Gamma \; \\
      & \quad \mbox{ and } B \mbox{ is a propositional term for } \Gamma \\
      \TP & (A \mbox{ is not a propositional term for }  \Gamma \; \\
      & \quad \mbox{ and } B \mbox{ is a propositional term for } (\Gamma ; x : A))\\
      \T & (\mbox{otherwise}) \\
    \end{cases}
  \end{equation*}
\end{definition}

The function $PT_{\Gamma,x}$ maps two types into string symbols $\{\PP, \TP, \T\}$.
Its goal is to discriminate cases of $\forall x : A.B$ to give them different interpretations.
%This definition is to make defference interpretation of $\forall x : A.B$.

Next, we introduce the Grothendieck universes as in~\cite{SetsInTypes}.
\begin{definition}
  Let $\alpha$ be an ordinal.
  We define $V_\alpha$ as follows.
  \begin{itemize}
  \item $V_0 = \phi$
  \item $V_{\alpha} = \displaystyle\bigcup_{\beta < \alpha} \mathcal{P}(V_{\beta})$
  \end{itemize}
  And we define the Grothendieck universe $\mathscr{U}(i)$ as follows
  \begin{equation*}
    \mathscr{U}(i) = V_{\lambda_i}
  \end{equation*}
  where $\lambda_i$ is $i$-th inaccessible cardinal.
\end{definition}
%%   And Let $\mathscr{U}(i) 
%% \begin{definition}[Universe]
%%   Let $\lambda_i$ be $i$-th inaccessible cardinal.
%%   Then, we define $\mathscr{U}(
%%   Let $\mathscr{U}(i)$ be $i$-th Grothendieck universe, i.e.
%%   \begin{itemize}
%%   \item $\mathscr{U}(i)$ is Grothendieck universe for each $i=0,1,2,3,...$
%%   \item $\mathscr{U}(i) \in \mathscr{U}(i+1)$ for each $i=0,1,2,3,...$
%%   \end{itemize}
%% \end{definition}
The following lemma is necessary when proving soundness.
\begin{lemma}\label{prod_univ}
  \
  \begin{itemize}
  \item $A \in \mathscr{U}(i)$ and $B(a) \in \mathscr{U}(i)$ for each $a \in A$ imply $\displaystyle\prod_{a \in A}B(a) \in \mathscr{U}(i)$.
  \item $A \in \mathscr{U}(i)$ implies $A \subset \mathscr{U}(i)$.
  \end{itemize}
\end{lemma}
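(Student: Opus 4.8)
The plan is to handle both parts through two basic structural facts about the cumulative hierarchy as defined here: that each $V_\alpha$ is transitive, and that membership is controlled by rank, i.e. writing $\mathrm{rank}(x)=\sup\{\mathrm{rank}(y)+1 \mid y \in x\}$ one has $x \in V_\alpha \iff \mathrm{rank}(x) < \alpha$ for every ordinal $\alpha$. Both are proved by a routine transfinite induction directly from $V_\alpha = \bigcup_{\beta<\alpha}\mathcal{P}(V_\beta)$: monotonicity $\beta<\alpha \Rightarrow V_\beta \subseteq V_\alpha$ is immediate, and then $x \in V_\alpha$ means $x \subseteq V_\beta$ for some $\beta<\alpha$, which yields transitivity. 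Granting these, the second bullet is instantaneous: if $A \in \mathscr{U}(i) = V_{\lambda_i}$ then, $V_{\lambda_i}$ being transitive, $A \subseteq V_{\lambda_i} = \mathscr{U}(i)$.

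For the first bullet I would first reduce the dependent product to a power set. Since every element of $\prod_{a\in A}B(a)$ is a subset of $C := \coprod_{a\in A}B(a)$, we have $\prod_{a\in A}B(a) \subseteq \mathcal{P}(C)$. Because $\lambda_i$ is a limit ordinal, $V_{\lambda_i}$ is closed under power set, so once $C \in \mathscr{U}(i)$ is established we get $\mathcal{P}(C) \in \mathscr{U}(i)$; and any subset $S$ of a member of $V_{\lambda_i}$ satisfies $\mathrm{rank}(S) \le \mathrm{rank}(\mathcal{P}(C)) < \lambda_i$, so $S \in \mathscr{U}(i)$. Hence everything reduces to showing $C \in \mathscr{U}(i)$, and since $C \subseteq A \times D$ with $D := \bigcup_{a\in A}B(a)$, the same rank comparison reduces this further to $A \times D \in \mathscr{U}(i)$, and (Kuratowski pairing raises rank by only a finite amount, $\lambda_i$ limit) to $A, D \in \mathscr{U}(i)$.

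The real content is therefore $D = \bigcup_{a\in A}B(a) \in \mathscr{U}(i)$, and this is exactly where inaccessibility of $\lambda_i$ enters. Strong-limitness gives $|V_\alpha| < \lambda_i$ for all $\alpha < \lambda_i$ (induction on $\alpha$, using $2^{|V_\alpha|}<\lambda_i$ at successor steps), so from $A \in V_{\lambda_i}$ we get $|A| < \lambda_i$. Each $B(a) \in \mathscr{U}(i)$ has $\mathrm{rank}(B(a)) < \lambda_i$, so $\{\mathrm{rank}(B(a)) \mid a \in A\}$ is a family of fewer than $\lambda_i$ ordinals below $\lambda_i$; by regularity of $\lambda_i$ its supremum $\gamma$ is still $< \lambda_i$. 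Then every $B(a) \subseteq V_\gamma$, whence $D \subseteq V_\gamma$ and $D \in V_{\gamma+1} \subseteq \mathscr{U}(i)$.

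The main obstacle is precisely this last step, controlling the union over the index set $A$. Transitivity and closure under power set only need $\lambda_i$ to be a limit, but bounding $|A|$ and the supremum of the ranks $\mathrm{rank}(B(a))$ uses, respectively, the strong-limit and the regularity halves of inaccessibility; without regularity the supremum could reach $\lambda_i$ and the argument would collapse. I would isolate the cardinality bound $|V_\alpha|<\lambda_i$ and the regularity of $\lambda_i$ as the two quantitative inputs and otherwise keep the argument purely rank-theoretic. The point condition on the reference point plays no role here and is not invoked.
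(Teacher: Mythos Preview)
The paper does not actually prove Lemma~\ref{prod_univ}; it merely states it as a well-known closure property of Grothendieck universes and immediately moves on to use it in the soundness proof. Your argument is the standard and correct one: transitivity of $V_{\lambda_i}$ gives the second bullet outright, and for the first bullet you correctly reduce $\prod_{a\in A}B(a)$ to membership of $\bigcup_{a\in A}B(a)$ via closure under power set, subsets, and pairs (all needing only that $\lambda_i$ is a limit), and then invoke strong-limitness to bound $|A|$ and regularity to bound $\sup_a \mathrm{rank}(B(a))$. Since the paper offers no proof to compare against, there is nothing to contrast; your write-up supplies exactly the missing justification, and your remark that the point condition is irrelevant here is accurate.
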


\subsection{Interpretation of the judgments}
In this model, a type $T$ is interpreted into a set $\jump{T}$, and a context $x_1 : T_1 ; x_2 : T_2 ; \cdots ; x_n : T_n$  is interpreted into a tuple in $\jump{T_1} \times \jump{T_2} \times \cdots \times \jump{T_n}$ when there are no dependent types in the context.
%In the next subsection, we show the soundness of the model by proving that $\Gamma \vdash t : T$ implies $\jump{\Gamma \vdash t}(\gamma) \in \jump{\Gamma \vdash T}(\gamma)$ for all $\gamma \in \jump{\Gamma}$.

First, we define the interpretation of contexts $\jump{\mathrm{-}}$, judgments $\jump{\mathrm{-} \vdash \mathrm{-}}$ and strict judgments $\jump{\mathrm{-} \vdash \mathrm{-}}'$ by mutual recursion as follows.

\begin{definition}[interpretation]\label{interpretation}
  Let $(X, \mathcal{O}(X))$ be a topological space, and $p$ be a reference point of $X$ satisfying the {\em point condition}.
  \begin{enumerate}[{(}i{)}]
  \item Definition of the strict-interpretation of a judgment $\jump{\mathrm{-} \vdash \mathrm{-}}'$ \\
    \begin{equation*}
      \jump{\Gamma \vdash A}' (\gamma) = \begin{cases} \jump{\Gamma \vdash A}(\gamma) \cap \{p\} & (A \mbox{ is a propositional term in } \Gamma) \\
        \jump{\Gamma \vdash A} (\gamma) & (\mbox{otherwise}) \end{cases}
    \end{equation*}
  \item Definition of the interpretation of a context $\jump{\mathrm{-}}$ \\
    \begin{eqnarray*}
      \jump{[]} &:=& \{()\} \\
      \jump{\Gamma ; (x:A)} &:=&
      \{(\gamma, \alpha) \mid \gamma \in \jump{\Gamma}  \; \mathrm{and} \; \alpha \in \jump{\Gamma \vdash A}'(\gamma) \} \\
      &=& \coprod_{\gamma \in \jump{\Gamma}} \jump{\Gamma \vdash A}'(\gamma)
      %    \begin{case} \mid \Gamma \mid \times \{m\} & (\Gamma \vdash^\dagger A : \mathrm{Prop} \mathrm{ can be derived.}) \\ \mid \Gamma \mid \times 
    \end{eqnarray*}
    
  \item Definition of the interpretation of a judgment $\jump{\mathrm{-} \vdash \mathrm{-}}$ \\
    If $t$ is a proof term for $\Gamma$, then
    \begin{equation*}
      \jump{\Gamma \vdash t} = p
    \end{equation*}
    otherwise,
    \begin{eqnarray*}
      \jump{\Gamma \vdash \mathrm{Type}_i}(\gamma) &:=& \mathscr{U}(i) \\
      \jump{\Gamma \vdash \mathrm{Prop}}(\gamma) &:=& \mathcal{O}(X) \\
      \jump{\Gamma \vdash \forall x:P.Q}(\gamma) &:=&
      \begin{cases}
        { \biggl(\jump{\Gamma \vdash Q}(\gamma) \biggr) }^{\jump{\Gamma \vdash P}(\gamma) } \\
        \quad\quad (\mbox{when} \; PT_{\Gamma,x}(P,Q) = \PP) \\
        & \\
        \bigsqcap \{ \jump{\Gamma ; (x:P) \vdash Q}(\gamma, \alpha) \mid \alpha \in \jump{\Gamma \vdash P}(\gamma) \} \\
        \quad\quad (\mbox{when} \; PT_{\Gamma, x}(P,Q) = \TP) \\
        & \\
        \prod_{\alpha \in \jump{\Gamma \vdash P}'(\gamma)} \jump{\Gamma ; (x:P) \vdash Q}(\gamma, \alpha) \\
        \quad\quad (\mbox{when} \; PT_{\Gamma, x}(P,Q) = \T)
      \end{cases} \\
      \jump{\Gamma \vdash \lambda x:A.t}(\gamma) &:=& \Bigl\{\bigl(\alpha, \jump{\Gamma ; (x : A) \vdash t}(\gamma, \alpha)\bigr) \; | \; \alpha \in \jump{\Gamma \vdash A}'(\gamma) \Bigr\} \\
      \jump{\Gamma \vdash u v}(\gamma) &:=&  \jump{\Gamma \vdash u}(\gamma)\biggl(\jump{\Gamma \vdash v}(\gamma)\biggr) \\
      \jump{\Gamma \vdash x_i}(\gamma) &:=& \gamma_i
    \end{eqnarray*}
  \end{enumerate}
  For simplicity, we write $\jump{T}$ for $\jump{[]\vdash T}()$, when the context is empty.
\end{definition}

The interpretation of a context $\jump{\Gamma}$ is a sequence whose length is the length of $\Gamma$. $\jump{\Gamma \vdash t}$ is the function whose domain is $\Gamma$ and which maps to some set.
Most cases are similar to Werner's interpretation, so we only explain the interpretation of $\forall x : P.Q$.
There are three cases, according to the result of $PT_{\Gamma, x}(P,Q)$.
When $PT_{\Gamma, x}(P,Q) = \PP$, the interpretation of $\jump{\Gamma \vdash \forall x : P.Q}$ represents the logical implication $P \Rightarrow Q$.
We use the Heyting algebra representation of this implication.
Here we assume that $x$ does not appear in $Q$, thanks to our restriction.
Otherwise we would need to build the interpretation of $\jump{\Gamma ; (x : P) \vdash Q}(\gamma, p)$, but this requires that $p \in \jump{\Gamma \vdash P}(\gamma)$, which is not always true.
When $PT_{\Gamma, x}(P,Q) = \TP$ the interpretation of $\jump{\Gamma \vdash \forall x : P.Q}$ represents universal quantification, and again we use the infinite meet operator of the complete Heyting algebra to express it.
In the last case only the representation becomes a set theoretical dependent function.

We start with the substitution lemma as follows.
Thus our interpretation is well behaved.
\begin{lemma}[substitution lemma]\label{substitution_interpretation}
  We assume $\Gamma \vdash u : U$ is derivable.
  If
  \begin{equation*}
    (\gamma, \jump{\Gamma \vdash u}(\gamma), \delta) \in \jump{\Gamma; (x : U) ; \Delta} 
  \end{equation*}
  holds for any $\gamma$ and $\delta$, then
  \begin{equation*}
    \jump{\Gamma ; \Delta[x \backslash u] \vdash t[x \backslash u]}(\gamma,\delta) = \jump{\Gamma; (x:U) ; \Delta \vdash t}(\gamma, \jump{\Gamma \vdash u}(\gamma),\delta)
  \end{equation*}
  for all $t$ and $\Delta$.
\end{lemma}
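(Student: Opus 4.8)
The plan is to prove the identity by structural induction on the term $t$, with the tail context $\Delta$ (and the accompanying $\delta$) universally quantified, so that the induction hypothesis remains applicable to subterms occurring under a binder, where $\Delta$ has been extended by one entry. Before branching on the shape of $t$, I would dispose of the proof-term case once and for all: if $t$ is a proof term for $\Gamma;(x:U);\Delta$, then by Lemma~\ref{substitution_in_proof} its substitution $t[x\backslash u]$ is a proof term for $\Gamma;\Delta[x\backslash u]$, so both sides of the desired equation collapse to the reference point $p$ and agree. In the same breath I would record that Lemma~\ref{substitution_in_proof} also makes the discriminator $PT$ stable under substitution, i.e. $PT_{\Gamma;\Delta[x\backslash u],y}(P[x\backslash u],Q[x\backslash u])$ equals $PT_{\Gamma;(x:U);\Delta,y}(P,Q)$; this guarantees that on both sides of the equation the interpretation of a product $\forall y:P.Q$ is computed in the same one of the three branches $\{\PP,\TP,\T\}$.

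For the remaining non-proof terms I would run the case analysis following Definition~\ref{interpretation}. The sort cases $\mathrm{Type}_i$ and $\mathrm{Prop}$ are immediate, since substitution fixes them and their interpretations $\mathscr{U}(i)$ and $\mathcal{O}(X)$ are constant. For an application $t = t_1 t_2$, substitution distributes as $(t_1 t_2)[x\backslash u] = (t_1[x\backslash u])(t_2[x\backslash u])$, the interpretation is set-theoretic function application, and the two induction hypotheses for $t_1$ and $t_2$ give the result. The binder cases $\lambda y:A.s$ and $\forall y:P.Q$ are the computational heart: here I would apply the induction hypothesis to the domain ($A$, resp. $P$) with the present $\Delta$, and to the body ($s$, resp. $Q$) with $\Delta$ extended by $(y:A)$ (resp. $(y:P)$). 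To invoke the induction hypothesis for the body I must check that the lemma's membership hypothesis propagates to the longer context, namely that $(\gamma,\jump{\Gamma\vdash u}(\gamma),\delta,\alpha)$ lies in $\jump{\Gamma;(x:U);\Delta;(y:A)}$ whenever $\alpha\in\jump{\Gamma;(x:U);\Delta\vdash A}'(\ldots)$; this follows from the induction hypothesis applied to $A$, which identifies the two strict domains. The equality then propagates through the outer algebraic operator — the exponential in the $\PP$ branch, the meet $\bigsqcap$ in the $\TP$ branch, and the dependent product $\prod$ in the $\T$ branch — because each is a function of the interpretations fed into it, and because the exponent, index, or product domain built from $P$ agrees on both sides by the induction hypothesis for the domain.

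The one genuinely new ingredient, and the step I expect to be the main obstacle, is the variable case $t=y$. This splits according to the position of $y$. If $y$ is one of the variables of $\Gamma$ or of $\Delta$, then $y[x\backslash u]=y$ and both sides select the corresponding component of $\gamma$ or $\delta$; the only care needed is the reindexing of projections caused by deleting the $(x:U)$ slot on the left, which matches the deletion performed by $\Delta\mapsto\Delta[x\backslash u]$. The delicate subcase is $y=x$: then $t[x\backslash u]=u$, the right-hand side is the projection onto the $x$-slot and equals $\jump{\Gamma\vdash u}(\gamma)$, whereas the left-hand side is $\jump{\Gamma;\Delta[x\backslash u]\vdash u}(\gamma,\delta)$. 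To close the gap I need a weakening (context-irrelevance) principle stating that, since $\Gamma\vdash u:U$ is derivable and hence $u$ mentions no variable of $\Delta$, interpreting $u$ in the extended context $\Gamma;\Delta[x\backslash u]$ reproduces its value $\jump{\Gamma\vdash u}(\gamma)$ in $\Gamma$. This auxiliary fact is itself proved by a routine induction on $u$ — the projections simply ignore the extra $\delta$-components — but it must be established or cited separately, and threading it correctly through the strict-interpretation bookkeeping of the other cases is the part that demands the most attention.
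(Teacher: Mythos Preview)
Your proposal is correct and follows essentially the same approach as the paper: dispose of proof terms via Lemma~\ref{substitution_in_proof}, then proceed by structural induction on $t$. The weakening principle you isolate for the $y=x$ subcase is exactly the paper's Lemma~\ref{interpretation_composition}, and the context-membership bookkeeping you flag is packaged as Lemma~\ref{interpretation_reduce}; otherwise the arguments coincide.
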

This lemma appears already in~\cite{SetsInTypes} and~\cite{not_simple}.
To prove it, we introduce the following two lemmas.
\begin{lemma}\label{interpretation_composition}
  \begin{equation*}\jump{\Gamma \vdash u}(\gamma) = \jump{\Gamma ; \Delta \vdash u}(\gamma, \delta)\end{equation*}
\end{lemma}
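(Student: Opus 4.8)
\section*{Proof proposal for Lemma \ref{interpretation_composition}}

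The plan is to prove the identity $\jump{\Gamma \vdash u}(\gamma) = \jump{\Gamma;\Delta \vdash u}(\gamma,\delta)$ by structural induction on the term $u$ (which the notation presupposes to be well typed in $\Gamma$), keeping $\Delta$ and the prefix $\Gamma$ general so that the binder cases can close. Throughout I rely on the standard weakening property: derivability, and hence the status of being a proof term or a propositional term, is preserved when a context is extended with fresh bindings. This has two consequences that keep Definition \ref{interpretation} uniform across the two sides. First, $PT_{\Gamma,x}(P,Q)=PT_{\Gamma;\Delta,x}(P,Q)$, so the three-way case split of the $\forall$-clause is the same in both contexts. Second, the strict interpretation $\jump{\Gamma\vdash A}'$ is compatible with weakening, since the clause distinguishing $\jump{\Gamma\vdash A}(\gamma)\cap\{p\}$ from $\jump{\Gamma\vdash A}(\gamma)$ depends only on whether $A$ is a propositional term, a property that is weakening-stable.

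First I would dispatch the easy cases. If $u$ is a proof term for $\Gamma$ then it is also one for $\Gamma;\Delta$, so both sides equal the reference point $p$. If $u$ is a sort, both sides are the constant $\mathscr{U}(i)$ or $\mathcal{O}(X)$. If $u$ is a variable of $\Gamma$, its lookup is unaffected by the extra components $\delta$, so the variable clause gives the equality directly. The application case $u=u_1u_2$ follows at once by applying the induction hypothesis to $u_1$ and $u_2$ and using that interpretations are applied pointwise.

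The substance lies in the binder cases $\lambda x:A.t$ and $\forall x:P.Q$. For the abstraction, the induction hypothesis on $A$ gives $\jump{\Gamma;\Delta\vdash A}(\gamma,\delta)=\jump{\Gamma\vdash A}(\gamma)$, and by weakening-stability of ``propositional term'' the same holds for the strict versions, so the index set over which the graph is formed is unchanged; the induction hypothesis on the body $t$ then equates the two graphs. For the quantifier I would split on $PT_{\Gamma,x}(P,Q)$, which by the remark above agrees with $PT_{\Gamma;\Delta,x}(P,Q)$. In the $\PP$ case both sides are exponentials $\jump{Q}^{\jump{P}}$, and since our restriction forces $x$ not to occur in $Q$, both $P$ and $Q$ are genuine terms of $\Gamma$ and the induction hypothesis on each finishes it. In the $\TP$ and $\T$ cases both sides are a meet, resp. a dependent product, indexed by $\jump{P}(\gamma)$, resp. $\jump{P}'(\gamma)$; one applies the induction hypothesis to $P$ for the index set and to $Q$ for each fibre.

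The main obstacle is exactly the step ``apply the induction hypothesis to the body''. There the body is interpreted in $\Gamma;(x:A)$ on the left but in $\Gamma;\Delta;(x:A)$ on the right, i.e. $\Delta$ has been inserted below the freshly bound variable rather than at the very end of the context. To make the induction go through I would therefore prove the slightly stronger statement in which $\Delta$ may be inserted after an arbitrary prefix, so that the bound variable sits above $\Delta$; the binder cases then feed into this strengthened hypothesis with the prefix enlarged by $(x:A)$. The one thing to verify carefully is that this insertion does not disturb the lookup of the variables $u$ actually uses — those of $\Gamma$ and the bound $x$ — which is precisely the content of the variable clause together with the convention that the bindings of $\Delta$ are fresh. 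Once this bookkeeping is fixed, each case reduces mechanically to the induction hypothesis.
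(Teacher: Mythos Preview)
The paper does not actually prove Lemma~\ref{interpretation_composition}: it is stated as one of two auxiliary lemmas used to establish the substitution lemma (Lemma~\ref{substitution_interpretation}), with no argument given, presumably regarding it as a routine weakening property. So there is no proof in the paper to compare your proposal against.

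Your approach is the standard one and is correct in outline. The one genuine subtlety you identify is exactly the right one: under the binder clauses the induction hypothesis must be invoked for the body in the context $\Gamma;\Delta;(x{:}A)$ versus $\Gamma;(x{:}A)$, where the extra block $\Delta$ sits \emph{below} the newly bound variable rather than at the end, so the statement as written does not feed back into itself. Your fix---strengthen the induction hypothesis to allow $\Delta$ to be inserted at an arbitrary position of the context, with the environment components for $\Delta$ simply ignored---is the usual remedy and works here because the variable clause $\jump{\Gamma\vdash x_i}(\gamma)=\gamma_i$ is name-based and the bindings in $\Delta$ are fresh for $u$. The remaining ingredients you invoke (weakening-stability of ``proof term'' and ``propositional term'', hence of $PT_{\Gamma,x}$ and of the strict interpretation $\jump{-\vdash-}'$) are exactly what is needed to keep the case analysis of Definition~\ref{interpretation} aligned on both sides.

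In short: your proposal is correct and more detailed than anything the paper provides; the paper simply asserts the lemma.
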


\begin{lemma}\label{interpretation_reduce}
  If $\jump{\Gamma ; (x : U) ; \Delta \vdash t}$ is well-defined, then so is $\jump{\Gamma ; \Delta[x \backslash u] \vdash t[x \backslash u]}$. And more, $(\gamma, \jump{\Gamma \vdash u}(\gamma), \delta) \in \jump{\Gamma ; (x : U) ; \Delta}$ implies $(\gamma, \delta) \in \jump{\Gamma; \Delta[x \backslash u]}$.
\end{lemma}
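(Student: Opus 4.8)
The plan is to prove both assertions by a single induction on the length of the tail context $\Delta$, run simultaneously with the equality of the substitution lemma~\ref{substitution_interpretation}; within the well-definedness claim I would additionally induct on the structure of the term $t$ following the clauses of Definition~\ref{interpretation}. The two tools used throughout are Lemma~\ref{substitution_in_proof}, which guarantees that being a proof term or a propositional term is invariant under the substitution $x \backslash u$, and the weakening lemma~\ref{interpretation_composition}. The first ensures that the proof-irrelevant branch and the strict-interpretation intersection with $\{p\}$ fire on exactly the same terms before and after substitution; the second handles the substituted variable.

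For the well-definedness part, I would assume $\jump{\Gamma;(x:U);\Delta \vdash t}$ is well-defined and argue by cases on $t$. If $t$ is a proof term for $\Gamma;(x:U);\Delta$, then its interpretation is $p$, and by Lemma~\ref{substitution_in_proof} the term $t[x \backslash u]$ is a proof term for $\Gamma;\Delta[x \backslash u]$, so its interpretation is again $p$. The sorts $\mathrm{Prop}$ and $\mathrm{Type}_i$ are fixed by substitution (Definition~\ref{substitution_ecc}) and have constant interpretations. The decisive case is $t = \forall y : P.Q$: here the discriminators $PT_{\Gamma;(x:U);\Delta,\,y}(P,Q)$ and $PT_{\Gamma;\Delta[x \backslash u],\,y}(P[x \backslash u],Q[x \backslash u])$ coincide, since by Lemma~\ref{substitution_in_proof} $P$ (resp.\ $Q$) is propositional before substitution iff its substitute is afterwards, so the same one of the three branches $\PP,\TP,\T$ is selected and well-definedness follows from the induction hypotheses on $P$ and $Q$ (extending $\Delta$ by $(y:P)$ in the $\TP$ and $\T$ branches). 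Abstraction and application reduce to their subterms. For a variable $t = x_i$, if $x_i = x$ then $t[x \backslash u] = u$ and $\jump{\Gamma;\Delta[x \backslash u] \vdash u}(\gamma,\delta) = \jump{\Gamma \vdash u}(\gamma)$ by weakening~\ref{interpretation_composition}, which is well-defined because $\Gamma \vdash u : U$; otherwise $x_i$ names a variable of $\Gamma$ or $\Delta$ and the corresponding projection survives the index shift.

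For the membership part I would induct on $\Delta$. The base case $\Delta = []$ is immediate, since $(\gamma,\jump{\Gamma \vdash u}(\gamma)) \in \jump{\Gamma;(x:U)} = \coprod_{\gamma' \in \jump{\Gamma}} \jump{\Gamma \vdash U}'(\gamma')$ forces $\gamma \in \jump{\Gamma}$. For the step $\Delta = \Delta';(y:A)$, unfolding the coproduct definition of the context interpretation splits the hypothesis into $(\gamma,\jump{\Gamma \vdash u}(\gamma),\delta') \in \jump{\Gamma;(x:U);\Delta'}$ and $\eta \in \jump{\Gamma;(x:U);\Delta' \vdash A}'(\gamma,\jump{\Gamma \vdash u}(\gamma),\delta')$. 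The induction hypothesis yields $(\gamma,\delta') \in \jump{\Gamma;\Delta'[x \backslash u]}$, so it only remains to transport $\eta$ into the set $\jump{\Gamma;\Delta'[x \backslash u] \vdash A[x \backslash u]}'(\gamma,\delta')$.

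This transport is the main obstacle, because it needs the equality of the two strict interpretations of $A$, which is precisely Lemma~\ref{substitution_interpretation} applied to the shorter context $\Delta'$ and the term $A$. I would resolve this by interleaving: at the stage $|\Delta| = n$ the equality~\ref{substitution_interpretation} is invoked only for $\Delta'$ of length $n-1$, where it is already established, so the combined induction on the length of $\Delta$ is well-founded and non-circular. Finally, since the strict interpretation differs from the ordinary one only by the proof-irrelevant intersection with $\{p\}$, and whether $A$ is propositional is itself preserved by Lemma~\ref{substitution_in_proof}, matching the ordinary interpretations is enough to match the strict ones; the transport of $\eta$ then goes through and the membership implication holds.
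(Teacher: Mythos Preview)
The paper does not actually give a proof of Lemma~\ref{interpretation_reduce}; it is stated as an auxiliary result and then immediately used in the proof of the substitution lemma~\ref{substitution_interpretation}. Your proposal therefore cannot be compared line-by-line with a paper proof, but it can be assessed against the paper's intended dependency structure.

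Your argument is essentially correct, and in particular you put your finger on the one real subtlety: the membership clause for the inductive step $\Delta = \Delta';(y:A)$ needs the two strict interpretations of $A$ to coincide, and that is exactly the substitution lemma for the shorter tail $\Delta'$. The paper presents the lemmas in the order \ref{interpretation_reduce} $\Rightarrow$ \ref{substitution_interpretation}, as if \ref{interpretation_reduce} were provable on its own; but as you observe, the only way to make this go through is a simultaneous induction on $|\Delta|$ in which, at stage $n$, one already has \ref{substitution_interpretation} for tails of length $<n$. Your interleaving scheme is precisely this, and it is well-founded. The paper's appeal to \cite{not_simple} for the substitution lemma is consistent with this reading: in Miquel--Werner the analogous lemmas are proved by one global mutual induction, which is what you reproduce. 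So your organization is not a deviation from the paper but rather a spelled-out version of what the paper leaves implicit.

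One small remark on the well-definedness part: when you treat $t = \forall y:P.Q$ you rely on the induction hypothesis for $Q$ in the context extended by $(y:P)$. Strictly speaking this is an induction on the structure of $t$ with the context varying, so you also need the membership clause (for the extended $\Delta$) to know that the relevant environments $(\gamma,\delta',\alpha)$ land in the interpreted context; this is again covered by your simultaneous scheme, but it is worth making the mutual dependency between the two halves of the lemma explicit in the write-up.
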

Next, we are ready to prove the substitution lemma \ref{substitution_interpretation}.
\begin{proof}[Proof of Lemma \ref{substitution_interpretation}]
  If $t$ is a proof term, it is clear by Lemma \ref{substitution_in_proof}.
  If $t$ is not a proof term, it is provable by induction on term $t$ by using Lemma \ref{interpretation_composition} and \ref{interpretation_reduce}.
\end{proof}

Finally we prove the following theorem about the interpretation of logical symbols in definition \ref{logical_symbol}.
It demonstrates the validity of the interpretation.
\begin{theorem}[interpretation of logical symbols]
  \
  \begin{enumerate}[{(}i{)}]
  \item $\jump{\Gamma \vdash \bot} = \phi$
  \item $\jump{\Gamma \vdash A \land B}(\gamma) = (\jump{\Gamma \vdash A}(\gamma)) \sqcap (\jump{\Gamma \vdash B}(\gamma))$
  \item $\jump{\Gamma \vdash A \lor B}(\gamma) = (\jump{\Gamma \vdash A}(\gamma)) \sqcup (\jump{\Gamma \vdash B}(\gamma))$
  \item $\jump{\Gamma \vdash \exists x : A. Q}(\gamma) = \displaystyle\bigsqcup_{\alpha \in \jump{\Gamma \vdash A}(\gamma)} \jump{\Gamma ; (x:A) \vdash Q}(\gamma, \alpha)$
  \item $\jump{\Gamma \vdash A \leftrightarrow B}(\gamma) = X \Rightarrow \jump{\Gamma \vdash A}(\gamma) = \jump{\Gamma \vdash B}(\gamma)$
  \item $\jump{\Gamma \vdash x =_A y}(\gamma) = X \Rightarrow \jump{\Gamma \vdash x}(\gamma) = \jump{\Gamma \vdash y}(\gamma)$
  \end{enumerate}
\end{theorem}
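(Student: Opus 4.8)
The plan is to compute each interpretation by unfolding the impredicative encodings of Definition \ref{logical_symbol} through the clauses of Definition \ref{interpretation}, and then to collapse the resulting expressions using the complete-Heyting-algebra identities of Lemma \ref{heyting_conditions}. The first task in every case is to read off the value of $PT_{\Gamma,x}$ at each quantifier so as to select the correct clause. Since the codomains appearing in all these encodings are propositions, every inner arrow $\cdots \to P$ falls into the $\PP$ case and is interpreted by the exponential, while each outer $\forall P : \mathrm{Prop}.(\cdots)$ and each $\forall x : A.(\cdots)$ with propositional body falls into the $\TP$ case and is interpreted by the infinite meet. A recurring pattern unifies items (i)--(iv): each has the shape $\forall P : \mathrm{Prop}.\,(\cdots \to P) \to P$, whose body at a value $\alpha \in \mathcal{O}(X)$ reduces to $\alpha^{(\alpha^{c})}$ for some constant $c$ independent of $\alpha$; taking $\bigsqcap_{\alpha \in \mathcal{O}(X)}$ and applying \eqref{eq:meetpower} then yields exactly $c$.

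For item (i), $\bot = \forall P : \mathrm{Prop}.\,P$ interprets as $\bigsqcap \mathcal{O}(X)$, the bottom element $\mathbb{O} = \phi$. For (ii) I would unfold $A \to B \to P$ to $(\alpha^{b})^{a} = \alpha^{a \sqcap b}$ by \eqref{eq:powerprod} (writing $a = \jump{\Gamma \vdash A}(\gamma)$ and $b = \jump{\Gamma \vdash B}(\gamma)$, both independent of $\alpha$ by weakening, Lemma \ref{interpretation_composition}), so $c = a \sqcap b$ and the meet gives $a \sqcap b$. For (iii) the body $(A \to P) \to (B \to P) \to P$ becomes $\alpha^{(\alpha^{a} \sqcap \alpha^{b})} = \alpha^{(\alpha^{a \sqcup b})}$ using \eqref{eq:powerprod} and \eqref{eq:prodpoweror}, so $c = a \sqcup b$. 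Item (iv) is the most involved: the inner $\forall x : A.\,(Q \to P)$ is a $\TP$-quantifier, which by \eqref{eq:meetpoweror} interprets as $\alpha^{\bigsqcup_{\beta} q(\beta)}$ with $q(\beta) = \jump{\Gamma;(x:A) \vdash Q}(\gamma,\beta)$ and $\beta$ ranging over $\jump{\Gamma \vdash A}(\gamma)$, so $c = \bigsqcup_{\beta} q(\beta)$, the claimed supremum.

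Items (v) and (vi) are meta-level implications that rely on the previous computations. For (v), the definition of $\leftrightarrow$ together with (ii) gives $\jump{\Gamma \vdash A \leftrightarrow B}(\gamma) = b^{a} \sqcap a^{b}$; the hypothesis that this equals the top $\mathbb{I} = X$ is precisely the premise of \eqref{eq:power_eqcond}, yielding $a = b$. For (vi), I would first unfold the outer $\TP$-quantifier over $\phi \in \jump{\Gamma \vdash A \to \mathrm{Prop}}(\gamma)$ --- which by the $\T$-clause is the set of \emph{all} set-theoretic functions $\jump{\Gamma \vdash A}(\gamma) \to \mathcal{O}(X)$ --- and use (v) to rewrite each body as $\phi(\eta)^{\phi(\xi)} \sqcap \phi(\xi)^{\phi(\eta)}$, where $\xi = \jump{\Gamma \vdash x}(\gamma)$ and $\eta = \jump{\Gamma \vdash y}(\gamma)$. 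If the whole meet equals $X$, then \eqref{eq:topcond} forces every member to equal $X$, and \eqref{eq:power_eqcond} then gives $\phi(\xi) = \phi(\eta)$ for every such $\phi$.

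The main obstacle is the final point-separation step of (vi): from $\phi(\xi) = \phi(\eta)$ for all $\phi$ one must conclude $\xi = \eta$. The plan here is to observe that since the reference point $p$ witnesses $X \neq \phi$, the lattice $\mathcal{O}(X)$ has the two distinct elements $\mathbb{O}$ and $\mathbb{I}$; hence for $\xi \neq \eta$ the indicator-style function sending $\xi \mapsto \mathbb{I}$ and every other point to $\mathbb{O}$ is a legitimate member of the product that separates $\xi$ from $\eta$, a contradiction. The remaining care throughout is purely bookkeeping: verifying via Lemma \ref{interpretation_composition} that the terms $A$, $B$, $Q$, $x$, $y$ do not depend on the freshly bound $P$ (so that exponents stay constant across the outer meet), and confirming that proof-term cases, where both sides collapse to $p$, are handled trivially.
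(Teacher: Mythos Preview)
Your proposal is correct and follows essentially the same route as the paper: unfold the impredicative encodings via Definition~\ref{interpretation}, identify the $\PP$/$\TP$/$\T$ cases, and reduce using the Heyting-algebra identities of Lemma~\ref{heyting_conditions} (specifically \eqref{eq:powerprod}, \eqref{eq:meetpower}, \eqref{eq:prodpoweror}, \eqref{eq:meetpoweror}, \eqref{eq:power_eqcond}, \eqref{eq:topcond}) together with Lemma~\ref{interpretation_composition}. Your treatment of (vi) is in fact slightly more detailed than the paper's, which simply asserts ``Hence, the statement holds'' after obtaining $f(\jump{\Gamma \vdash x}(\gamma)) = f(\jump{\Gamma \vdash y}(\gamma))$ for every $f$; your explicit indicator-function argument supplies exactly the missing step.
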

\begin{proof}
  Let $a, b, q(\alpha)$ be
  \begin{eqnarray*}
    a &:=& \jump{\Gamma \vdash A}(\gamma) \\
    b &:=& \jump{\Gamma \vdash B}(\gamma) \\
    q(\alpha) &:=& \jump{\Gamma ; (x : A) \vdash Q}(\gamma, \alpha).
  \end{eqnarray*} 
  By using Lemma \ref{heyting_conditions} and Lemma \ref{interpretation_composition} we have the followings:
  \begin{enumerate}[{(}i{)}]
  \item The proof of $\jump{\Gamma \vdash \bot} = \phi$.
    \begin{eqnarray*}
      \jump{\Gamma \vdash \bot}(\gamma) &=& \jump{\Gamma \vdash \forall P : \mathrm{Prop}. P}(\gamma) \\
      &=& \bigsqcap \{\jump{\Gamma ; (P : \mathrm{Prop}) \vdash P}(\gamma, x) \; | \; x \in \jump{\Gamma \vdash \mathrm{Prop}}(\gamma) \} \\
      &=& \bigsqcap \{x | x \in \mathcal{O}(X) \} \\
      &=& \phi
    \end{eqnarray*}
  \item The proof of $\jump{\Gamma \vdash A \land B}(\gamma) = (\jump{\Gamma \vdash A}(\gamma)) \sqcap (\jump{\Gamma \vdash B}(\gamma))$.
    \begin{eqnarray*}
      \jump{\Gamma \vdash A \land B}(\gamma) &=& \jump{\Gamma \vdash \forall P : \mathrm{Prop}. (A \rightarrow (B \rightarrow P)) \rightarrow P}(\gamma) \\
      &=& \bigsqcap \{ x^{(x^b)^a} \; | \; x \in \mathcal{O}(X) \} \\
      &=& \bigsqcap \{ x^{x^{a \sqcap b}} \; | \; x \in \mathcal{O}(X) \} \quad (\mathrm{by \; Lemma} \; \ref{heyting_conditions} \; (\ref{eq:powerprod}))\\
      &=& a \sqcap b \quad (\mathrm{by \; Lemma} \; \ref{heyting_conditions} \; (\ref{eq:meetpower}))\\
      &=& \jump{\Gamma \vdash A}(\gamma) \sqcap \jump{\Gamma \vdash B}(\gamma)
    \end{eqnarray*}
  \item The proof of $\jump{\Gamma \vdash A \lor B}(\gamma) = (\jump{\Gamma \vdash A}(\gamma)) \sqcup (\jump{\Gamma \vdash B}(\gamma))$.
    \begin{eqnarray*}
      \jump{\Gamma \vdash A \lor B}(\gamma) &=& \jump{\Gamma \vdash \forall P : \mathrm{Prop}. (A \rightarrow P) \rightarrow ((B \rightarrow P) \rightarrow P)}(\gamma) \\
      &=& \bigsqcap \{ (x^{x^b})^{x^a} \; | \; x \in \mathcal{O}(X) \} \\
      &=& \bigsqcap \{ x^{x^a \sqcap x^b} \; | \; x \in \mathcal{O}(X) \} \quad (\mathrm{by \; Lemma} \; \ref{heyting_conditions} \; (\ref{eq:powerprod}))\\
      &=& \bigsqcap \{ x^{x^{a \sqcup b}} \; | \; x \in \mathcal{O}(X) \} \quad (\mathrm{by \; Lemma} \; \ref{heyting_conditions} \; (\ref{eq:prodpoweror}))\\
      &=& a \sqcup b \quad (\mathrm{by \; Lemma} \; \ref{heyting_conditions} \; (\ref{eq:meetpower}))\\
      &=& \jump{\Gamma \vdash A}(\gamma) \sqcup \jump{\Gamma \vdash B}(\gamma)
    \end{eqnarray*}
  \item The proof of $\jump{\Gamma \vdash \exists x : A. Q}(\gamma) = \displaystyle\bigsqcup_{\alpha \in \jump{\Gamma \vdash A}(\gamma)} \jump{\Gamma ; (x:A) \vdash Q}(\gamma, \alpha)$.
    \begin{eqnarray*}
      \jump{\Gamma \vdash \exists a : A. Q}(\gamma) &=& \jump{\Gamma \vdash \forall P : \mathrm{Prop}. (\forall a : A. (Q \rightarrow P) \rightarrow P}(\gamma) \\
      &=& \bigsqcap \{ x^{\bigsqcap\{x^{q(\alpha)} \; | \; \alpha \in a\}} \; | \; x \in \mathcal{O}(X) \} \\
      &=& \bigsqcap \{ x^{x^{\bigsqcup\{q(\alpha) \; | \; \alpha \in a\}}} \; | \; x \in \mathcal{O}(X)\} \quad (\mathrm{by \; Lemma} \; \ref{heyting_conditions} \; (\ref{eq:meetpoweror}))\\
      &=& \bigsqcup\{q(\alpha) \; | \; \alpha \in a \} \quad (\mathrm{by \; Lemma} \; \ref{heyting_conditions} \; (\ref{eq:meetpower}))\\
      &=& \bigsqcup_{\alpha \in \jump{\Gamma \vdash A}(\gamma)}\jump{\Gamma ; (a : A) \vdash Q}(\gamma, \alpha)
    \end{eqnarray*}
  \item The proof of $\jump{\Gamma \vdash A \leftrightarrow B}(\gamma) = X \Rightarrow \jump{\Gamma \vdash A}(\gamma) = \jump{\Gamma \vdash B}(\gamma)$.
    \begin{eqnarray*}
      \jump{\Gamma \vdash A \leftrightarrow B}(\gamma) &=& \jump{\Gamma \vdash A \rightarrow B}(\gamma) \sqcap \jump{\Gamma \vdash B \rightarrow A}(\gamma) \\
      &=& a^b \sqcap b^a
    \end{eqnarray*}
    Hence we have $a=b$ by Lemma \ref{heyting_conditions} (\ref{eq:power_eqcond}) since $a^b \sqcap b^a = X$.
  \item The proof of $\jump{\Gamma \vdash x =_A y}(\gamma) = X \Rightarrow \jump{\Gamma \vdash x}(\gamma) = \jump{\Gamma \vdash y}(\gamma)$.
    \begin{eqnarray*}
      \jump{\Gamma \vdash x =_A y}(\gamma) &=& \jump{\Gamma \vdash \forall Q : (A \rightarrow \mathrm{Prop}). Q \; x \leftrightarrow Q \; y}(\gamma) \\
      &=& \bigsqcap_{f : a \rightarrow \mathcal{O}(X)}\jump{\Gamma ; (Q : A \rightarrow \mathrm{Prop}) \vdash Q \; x \leftrightarrow Q \; y}(\gamma, f)
    \end{eqnarray*}
    Since $\jump{\Gamma \vdash x =_A y}(\gamma) = X$ and Lemma \ref{heyting_conditions} (\ref{eq:topcond}), we have the following fact:
    \begin{equation*}
      \forall f : a \rightarrow \mathcal{O}(X), \jump{\Gamma ; (Q \rightarrow \mathrm{Prop}) \vdash Q \; x \leftrightarrow Q \; y}(\gamma, f) = X
    \end{equation*}
    Therefore we have $f(\jump{\Gamma \vdash x}(\gamma)) = f(\jump{\Gamma \vdash y}(\gamma))$ for any function $f : a \rightarrow \mathcal{O}(X)$.
    Hence, the statement holds.
  \end{enumerate}
\end{proof}

\subsection{Soundness}
We are ready to prove the soundness of this type system.

\begin{theorem}[soundness]\label{soundness}
  We assume $\jump{\Gamma}$ is non empty set.
  \begin{enumerate}
  \item If $t_1 =_\beta t_2$, and $\Gamma \vdash t_1 : T, \Gamma \vdash t_2 : T$ is derivable, then $\jump{\Gamma \vdash t_1}(\gamma) = \jump{\Gamma \vdash t_2}(\gamma)$.
  \item If $\Gamma \vdash t : T$ is derivable and $\jump{\Gamma}$ is non-empty set, then $\jump{\Gamma \vdash t}(\gamma) \in \jump{\Gamma \vdash T}(\gamma)$.
    %  \item If $\Gamma \vdash x : P$ and $\Gamma \vdash P : \mathrm{Prop}$ is derivable, then $\jump{\Gamma \vdash x}(\gamma) = p$.
  \end{enumerate}
\end{theorem}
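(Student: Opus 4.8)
The plan is to prove both parts by induction, establishing part 1 first and then invoking it inside the Beta Equality case of part 2. For part 1 I would induct on the generating clauses of $=_\beta$ (Definition \ref{beta_ecc}). The base clause $(\lambda x : A. t)\,a =_\beta t[x \backslash a]$ is precisely the content of the substitution lemma \ref{substitution_interpretation}: unfolding $\jump{\Gamma \vdash \lambda x : A. t}(\gamma)$ and the application case of Definition \ref{interpretation} yields $\jump{\Gamma ; (x : A) \vdash t}\bigl(\gamma, \jump{\Gamma \vdash a}(\gamma)\bigr)$, which Lemma \ref{substitution_interpretation} (with $\Delta = []$) identifies with $\jump{\Gamma \vdash t[x \backslash a]}(\gamma)$. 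The three congruence clauses follow immediately from the induction hypothesis, since $\jump{-}$ is defined compositionally on application, abstraction, and $\forall$.

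For part 2 I would induct on the derivation of $\Gamma \vdash t : T$, splitting at the outset on whether $t$ is a proof term. When $t$ is \emph{not} a proof term, each typing rule is discharged by unfolding the interpretation and invoking the relevant preservation fact: the Axiom rules need $\mathcal{O}(X) \in \mathscr{U}(i)$ and $\mathscr{U}(i) \in \mathscr{U}(i+1)$ (closure properties of the Grothendieck universes at inaccessibles); Subtyping is the second clause of Lemma \ref{prod_univ}; the $\T$ case of PI-Type uses the first clause of Lemma \ref{prod_univ} to place the dependent product in $\mathscr{U}(\mathrm{max}(i,j))$; the $\T$ case of Abstract checks that the graph prescribed in Definition \ref{interpretation} is a genuine element of that product; Apply is handled by the substitution lemma \ref{substitution_interpretation}; Variable is immediate from the definition of $\jump{\Gamma ; (x : A)}$; and Beta Equality invokes part 1. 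The hypothesis that $\jump{\Gamma}$ is non-empty is what supplies the point $\gamma$ at which the statements are evaluated.

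The genuine difficulty is the proof-term case. Because every proof term collapses to the reference point, $\jump{\Gamma \vdash t}(\gamma) = p$, soundness for proofs reduces to the single assertion
\[
  p \in \jump{\Gamma \vdash T}(\gamma) \text{ whenever } T \text{ is a provable propositional term for } \Gamma,
\]
which I would establish by a nested induction on the typing derivation of the proof. In the $\PP$ case of the Abstract rule the type is an implication $\forall x : P. Q$ with $x \notin Q$, whose denotation is the exponential $\bigl(\jump{\Gamma \vdash Q}(\gamma)\bigr)^{\jump{\Gamma \vdash P}(\gamma)}$; since $x \notin Q$, Lemma \ref{interpretation_composition} collapses the extended-context denotation of $Q$ back to $\jump{\Gamma \vdash Q}(\gamma)$, the inner induction gives $p \in \jump{\Gamma \vdash Q}(\gamma)$, and property \eqref{eq:powerle} ($x \le x^y$) lifts this to membership in the exponential.

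The $\TP$ case is where the \emph{point condition} becomes indispensable, and I expect it to be the main obstacle. There the denotation is the meet $\bigsqcap\{\jump{\Gamma ; (x:A) \vdash Q}(\gamma, \alpha) \mid \alpha \in \jump{\Gamma \vdash A}(\gamma)\}$, which by the topological definition of meet equals $\bigl(\bigcap_\alpha \jump{\Gamma ; (x:A) \vdash Q}(\gamma,\alpha)\bigr)^\circ$. Each member of this family is an open set containing $p$, hence lies in $\mathcal{U}(p)$, so $\bigcap \mathcal{U}(p)$ is contained in the intersection; the point condition guarantees that $\bigcap \mathcal{U}(p)$ is itself open and contains $p$, forcing $p$ into the interior. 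Thus the point condition is exactly the hypothesis that lets $p$ survive the interior-of-intersection, and I anticipate that verifying this meet case, together with the bookkeeping needed to keep the induction on proofs coherent with the induction on general terms, will be the crux of the argument.
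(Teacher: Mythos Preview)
Your proposal is correct and follows essentially the same route as the paper's proof: part~1 via the substitution lemma (Lemma~\ref{substitution_interpretation}), and part~2 by induction on the typing derivation with the three-way $PT$ split, invoking Lemma~\ref{prod_univ} for the $\T$ cases, property~\eqref{eq:powerle} for $\PP$-Abstract, and the point condition for $\TP$-Abstract. Your $\TP$-Abstract argument is in fact slightly sharper than the paper's: the appendix asserts that the intersection $\bigcap_\alpha \jump{\Gamma;(x:A)\vdash B}(\gamma,\alpha)$ is itself open ``by the point condition,'' which is not literally what that condition guarantees, whereas your argument---that $\bigcap\mathcal{U}(p)$ is an open subset of this intersection containing $p$, hence $p$ lies in its interior $\bigsqcap_\alpha\{\cdots\}$---is precisely what is required.
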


\begin{proof}[Proof of Theorem \ref{soundness}]
  \ \\
  1. It is sufficient that $\jump{\Gamma \vdash (\lambda x :U. t) \; u}(\gamma) = \jump{\Gamma \vdash t[x \backslash u]}(\gamma)$.
  By using Lemma \ref{substitution_interpretation}, 
  \begin{eqnarray*}
    && \jump{\Gamma \vdash (\lambda x : U.t) u} \\
    &=& \jump{\Gamma \vdash \lambda x : U.t}(\gamma)\bigl(\jump{\Gamma \vdash u}(\gamma)\bigr) \\
    &=& \jump{\Gamma ; (x : U) \vdash t}(\gamma, \jump{\Gamma \vdash u}(\gamma)) \\
    &=& \jump{\Gamma \vdash t[x \backslash u]}(\gamma)
  \end{eqnarray*}
  Hence, the statement holds. \\
  \\
  2. This is proved by induction on the Typing Rules in Table~\ref{tab:typing_rule_of_ECC}.
  For details, see Appendix~\ref{soundness_ind}.
  We must be careful in the case of Abstraction, i.e. $T = \forall x : A.B$ and $PT_{\Gamma, x}(A, B) = \TP$.
  To prove the soundness, we need the following equation
  \begin{equation*}
    \jump{\Gamma \vdash \forall x : A.B}(\gamma) = \bigcap\{\jump{\Gamma ; (x : A) \vdash B}(\gamma, \alpha) | \alpha \in \jump{\Gamma \vdash A}(\gamma) \}.
  \end{equation*}
  This equation does not hold in general, however we can obtain it by assuming the point condition at $p$.
\end{proof}

\begin{corollary}
  If $P$ is a provable propositional term for $\Gamma$, then 
  \begin{equation*}
    \forall \gamma \in \jump{\Gamma}, p \in \jump{\Gamma \vdash P}(\gamma)
  \end{equation*}
  holds.
\end{corollary}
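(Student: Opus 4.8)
The plan is to obtain this corollary as an immediate specialization of the soundness theorem (Theorem~\ref{soundness}) together with the clause of Definition~\ref{interpretation} that interprets every proof term by the reference point $p$. The key observation is that ``provable propositional term'' unpacks exactly into the existence of a derivation to which part~2 of soundness applies, and that the interpretation of any proof term collapses to the single value $p$ irrespective of $\gamma$.

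First I would unfold the hypothesis. Since $P$ is a provable propositional term for $\Gamma$, by definition $P$ is a propositional term for $\Gamma$ and there is a term $q$ with $\Gamma \vdash q : P$ derivable; because $P$ is a propositional term, $q$ is by definition a proof term for $\Gamma$. Next I would dispose of the non-emptiness caveat: if $\jump{\Gamma} = \phi$ the universally quantified conclusion holds vacuously, so I may assume $\jump{\Gamma}$ is non-empty, which is precisely the hypothesis demanded by Theorem~\ref{soundness}. Then, applying part~2 of soundness to the derivable judgment $\Gamma \vdash q : P$, I obtain $\jump{\Gamma \vdash q}(\gamma) \in \jump{\Gamma \vdash P}(\gamma)$ for every $\gamma \in \jump{\Gamma}$. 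Finally, since $q$ is a proof term for $\Gamma$, the proof-term clause of Definition~\ref{interpretation} gives $\jump{\Gamma \vdash q}(\gamma) = p$; substituting this equality into the membership just obtained yields $p \in \jump{\Gamma \vdash P}(\gamma)$ for all $\gamma \in \jump{\Gamma}$, which is exactly the claim.

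There is essentially no deep obstacle here, as the statement is a repackaging of soundness specialized to proofs of propositions; the only points demanding care are bookkeeping ones. I must check that $q$ genuinely qualifies as a proof term, which follows from $P$ being a propositional term rather than an arbitrary type, so that the interpretation is forced to equal the constant $p$ and not some set-theoretic function value. I must also ensure that the non-emptiness hypothesis of soundness is either satisfied or the goal is vacuous, so that the single invocation of Theorem~\ref{soundness} is legitimate. Provided these are handled, the chain consisting of one application of soundness followed by one rewrite using the interpretation of proof terms closes the proof.
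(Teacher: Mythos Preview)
Your proof is correct and matches the paper's intended argument: the corollary is stated immediately after Theorem~\ref{soundness} with no explicit proof, precisely because it is the direct specialization you describe---apply part~2 of soundness to the derivable $\Gamma \vdash q : P$ and then use the proof-term clause of Definition~\ref{interpretation} to rewrite $\jump{\Gamma \vdash q}(\gamma)$ as $p$. Your handling of the non-emptiness hypothesis via vacuity is also the natural way to discharge that side condition.
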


\section{Application}\label{application}
Let's compare Werner's classical model with our intuitionistic model on some simple cases.
%We will show simple cases, classical model and the model principle of excluded middle doesn't hold.
\subsection{Classical model}
We start with the simplest case.
Let the topological space be the simplest one, which is the trivial topological space with its base set the singleton $\{ \phi \}$.
\begin{eqnarray*}
X &:=& 1 = \{\phi\}, \\
\mathcal{O}(X) &:=& \{0,1\} = \{\phi, \{\phi\}\}, \\
p &:=& 0 = \phi
\end{eqnarray*}
This coincides with Werner's Model~\cite{SetsInTypes}.
However this model is so coarse that it represents classical logic, since the principle of excluded middle holds.
%But this model is classical, since
\begin{eqnarray*}
  0 &\in& \jump{\forall P : \mathrm{Prop}. P \lor \neg P} \\
  &=& \bigsqcap_{o \in \mathcal{O}(X)} o \lor \neg o = 1.
\end{eqnarray*}

If we want to be more discriminating, we need more opens set in $\mathcal{O}(X)$.

\subsection{Models disproving excluded middle}

Now, let us consider the next simplest topological space.
To do this, we add a new point `$1$' and a new open set $\{\phi, \{\phi\}\}$ into the topological space.
\begin{eqnarray*}
X &:=& 2 = \{0,1\}, \\
\mathcal{O}(X) &:=& \{0,1,2\} = \{\phi, \{\phi\}, \{\phi, \{\phi\}\}\}, \\
p &:=& 1 = \{ \phi \}.
\end{eqnarray*}
Although this model stays simple, its topological space is fine enough to avoid the principle of excluded middle, since the following statement holds.
%In this model, we have the following fact
\begin{equation*}
  1 \notin \jump{\forall P : \mathrm{Prop}. P \lor \neg P} = 1.
\end{equation*}
This statement is derived by using the following equations.
%by using the following equations
\begin{eqnarray*}
  \neg 0 &=& 2, \\
  \neg 1 &=& 0, \\
  \neg 2 &=& 0.
\end{eqnarray*}
By our soundness theorem, this proves that the principle of excluded middle cannot be deduced in ECC.

\begin{table}[t]
  \begin{center}
    \begin{tabular}{|c||c|c|c|} \hline
%      $x \backslash y$ & $0$ & $1$ & $2$ \\ \hline \hline
      $x^y$ & $0$ & $1$ & $2$ \\ \hline \hline
      $0$ & $2$ & $0$ & $0$ \\ \hline
      $1$ & $2$ & $2$ & $1$ \\ \hline
      $2$ & $2$ & $2$ & $2$ \\ \hline
    \end{tabular}
    \caption{value of $x^y$ to disprove the principle of excluded middle}
    \label{tab:valuea}
  \end{center}
\end{table}

%Hence this model avoids the principle of excluded middle.

%Note that
%\begin{equation*}
%  \jump{(\forall P : P \lor \neg P) \rightarrow \bot} = 2.
%\end{equation*}

Yet this model is not fully intutionistic as the linearity axiom $(P \rightarrow Q) \lor (Q \rightarrow P)$ holds,
since we have the following fact by Table \ref{tab:valuea}.

\begin{eqnarray*}
  && \jump{\forall P : \mathrm{Prop}. \forall Q : \mathrm{Prop}. (P \rightarrow Q) \lor (Q \rightarrow P)} \\
  &=& \bigsqcap_{o_1, o_2 \in \mathcal{O}(X)} o_1^{o_2} \lor o_2^{o_1} \\
  &=& 2.
\end{eqnarray*}
This is actually interesting because it shows that we can use this model to prove non trivial facts,
for instance that the excluded middle cannot be deduced from the linearity axiom in ECC.
Indeed,
\begin{equation*}
  \jump{(\forall P : \mathrm{Prop}. \forall Q : \mathrm{Prop}. (P \rightarrow Q) \lor (Q \rightarrow P)) \quad \rightarrow \quad (\forall P : \mathrm{Prop}. P \lor \neg P) } = 1.
\end{equation*}

By our soundness theorem, this equation means that there is no term proving the above implication in ECC.\par
%the principle of excluded middle does not imply $P \to Q \lor Q \to P$ for all propositions $P, Q$ in extended caluculus of construction.\par

\begin{table}[t]
  \begin{center}
    \begin{tabular}{|c||c|c|c|c|c|} \hline
      $x^y$ & $\phi$ & $\alpha$ & $\beta$ & $\gamma$ & $X$ \\ \hline \hline
      $\phi$ & $X$ & $\phi$ & $\phi$ & $\phi$ & $\phi$ \\ \hline
      $\alpha$ & $X$ & $X$ & $\alpha$ & $\alpha$ & $\alpha$ \\ \hline
      $\beta$ & $X$ & $\beta$ & $X$ & $\beta$ & $\beta$ \\ \hline
      $\gamma$ & $X$ & $X$ & $X$ & $X$ & $\gamma$ \\ \hline
      $X$ & $X$ & $X$ & $X$ & $X$ & $X$ \\ \hline
    \end{tabular}
    \caption{value of $x^y$ to disprove the linearity axiom}
    \label{tab_valueb}
  \end{center}
\end{table}

By adding more elements we can refine the model further.
%Next, let the topological space and its reference point be as follows:
Let
\begin{eqnarray*}
X &:=& \{a,b,x\} \\
\mathcal{O}(X) &:=& \{\phi, \alpha, \beta, \gamma, X\}, \\
&=& \{\phi, \{a\}, \{b\}, \{a,b\}, \{a,b,x\}\}, \\
p &:=& x.
\end{eqnarray*}
In this model, $P \rightarrow Q \lor Q \rightarrow P$ does not hold,
since we have the following fact by Table \ref{tab_valueb}.
\begin{equation*}
  x \notin \jump{\forall P : \mathrm{Prop}. \forall Q : \mathrm{Prop}. P \rightarrow Q \lor Q \rightarrow P} = \gamma
\end{equation*}
%We have the above fact by Table 3.%\ref{tab:valueb}.

\section{Reynolds' Paradox}\label{reynolds}
There is a problem when expanding the set theoretical model, which is called Reynolds' paradox~\cite{model_not_set}.
Basically Reynolds' paradox says that if the interpretation of an impredicative sort has more than one element, it causes a cardinality paradox in the set theoretical model.
This seems to be in contradiction with our model, so in this section we will analyze its assumptions.
%In this section, we introduce the Reynolds' paradox.
\subsection{Outline of the Paradox}
Let $\mathbb{T}$ be an impredicative sort, i.e. if $\Gamma \vdash A : s$ and $\Gamma ; (x : A) \vdash B : \mathbb{T}$ are derivable for any sort $s$ then $\Gamma \vdash \forall x : A.B : \mathbb{T}$ is derivable.
We assume that there exists a type $B$ whose sort is $\mathbb{T}$ such that $\jump{B}$ has at least two elements, i.e.
\begin{equation*}
  \vdash B : \mathbb{T} \quad and \quad \sharp \jump{B} \geq 2.
\end{equation*}

%We assume that there exists propositional term $B$ in $[]$ such that $\jump{\vdash B}$ has more than one element.
%In other word,
%\begin{equation*}
%  \exists B, \; s.t. \; \vdash B : \mathrm{Prop} \quad and \quad \sharp \jump{\vdash B} \geq 2.
%\end{equation*}

In~\cite{model_not_set} Reynolds says that the existence of such a term $B$ causes a paradox in set-theoretical models.
First, we define the category $\mathbf{Sets}_\mathbb{I}$ and the endofunctor $T$ of $\mathbf{Sets}_\mathbb{I}$.

\begin{definition}
  \
  \begin{itemize}
  \item Let $\mathbf{Sets}_\mathbb{I}$ be a category with:
    \begin{itemize}
    \item $\mathrm{Obj}(\mathbf{Sets}_\mathbb{I}) := \{ \jump{P} \; | \; \vdash P : \mathbb{I}$ is derivable $\}$
    \item $\mathrm{Hom}(\jump{P_1}, \jump{P_2}) := \jump{P_1} \rightarrow  \jump{P_2} = \{f \; | \; f \; \mathrm{is \; a \; function\; from} \; \jump{P_1} \; \mathrm{to} \; \jump{P_2} \}$
    \end{itemize}
  \item Let $T$ be a endofunctor of $\mathbf{Sets}_\mathbb{I}$ with
    \begin{itemize}
    \item $T(\jump{P}) := (\jump{P} \rightarrow \jump{B}) \rightarrow \jump{B}$
    \item $T(\rho) := h \in T(\jump{P_1}) \mapsto \{(g, h(g \circ \rho)) | g \in \jump{P_2} \rightarrow \jump{B}\}$
      \\ where $\rho \in \jump{P_1} \rightarrow \jump{P_2}$
    \end{itemize}
  \end{itemize}
\end{definition}

The paper~\cite{model_not_set} claims the following lemma:
\begin{lemma}\label{paradox}
  \
  \begin{itemize}
  \item $\exists u \in \mathrm{Obj}(\mathbf{Sets}_\mathbb{I}), \exists H \in \mathrm{Hom}(Tu,u)$ s.t. \\
    $\quad \forall s \in \mathrm{Obj}(\mathbf{Sets}_\mathbb{I}), \forall f \in \mathrm{Hom}(Ts, s), \exists ! \rho \in \mathrm{Hom}(u,s)$ s.t. \\
    $\quad\quad$ following diagram commutes.
    $$
    \begin{CD}
      Tu @> T\rho >> Ts \\
      @VHVV @VVfV \\
      u @> \rho >> s
    \end{CD}
    $$
  \item $Tu$ and $u$ are equivalent, i.e. $Tu \cong u$.
  \end{itemize}
\end{lemma}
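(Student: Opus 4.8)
The plan is to read the two assertions as, first, the existence of an \emph{initial} $T$-algebra $(u,H)$ in $\mathbf{Sets}_\mathbb{I}$ and, second, Lambek's lemma applied to that initial algebra. I would construct the initial algebra by the impredicative (Church) encoding of the least fixed point $\mu T$. Writing $TX$ for the type $((X \to B) \to B)$, set
$$W := \forall X : \mathbb{I}.\, (TX \to X) \to X.$$
Since $\mathbb{I}$ is impredicative and $B$ has sort $\mathbb{I}$, the product rule for the impredicative sort gives $\vdash W : \mathbb{I}$, so $u := \jump{W}$ is an object of $\mathbf{Sets}_\mathbb{I}$. Unfolding the interpretation of a $\forall$ over $\mathbb{I}$, an element of $u$ is a family $\eta = (\eta_A)$ assigning to each object $A$ and each $f \in \mathrm{Hom}(TA,A)$ an element $\eta_A(f) \in A$.

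Next I would exhibit the structure map and the candidate mediating morphisms. For an algebra $(s,f)$ define the fold $\rho_{s,f} : u \to s$ by $\rho_{s,f}(\eta) := \eta_s(f)$; this is a morphism of $\mathbf{Sets}_\mathbb{I}$. Define $H : Tu \to u$ by $H(\phi)_A(f) := f\bigl(T(\rho_{A,f})(\phi)\bigr)$, using the functorial action $T(\rho)$ from the definition of $T$. Taking $\rho := \rho_{s,f}$, a direct computation settles commutativity of the square:
$$\rho(H(\phi)) = H(\phi)_s(f) = f\bigl(T(\rho)(\phi)\bigr) = (f \circ T\rho)(\phi).$$
This proves the existence half of the first bullet, i.e. that $(u,H)$ is \emph{weakly} initial, and it simultaneously produces the specific morphism $\rho_0 := \rho_{Tu,\,TH}$ (the fold into the algebra $(Tu, TH)$) needed for Lambek's lemma.

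The main obstacle is \emph{uniqueness} of $\rho$, which is the only nontrivial step. With the interpretation of $\forall$ read as a plain set-theoretic product, $u$ may contain non-natural ``junk'' families, and the catamorphism need not be unique, since uniqueness of the fold is equivalent to an induction principle that the bare product does not validate. I would supply the two parametricity facts that repair this: (i) every $\eta$ commutes with algebra morphisms, i.e. $g(\eta_A(f)) = \eta_{A'}(f')$ whenever $g$ is an algebra map from $(A,f)$ to $(A',f')$ (the free theorem for the type $W$), and (ii) $\eta_u(H) = \eta$ for every $\eta$ (the identity law for the generic algebra $(u,H)$). Granting these, any algebra morphism $h : (u,H) \to (s,f)$ satisfies $h(\eta) = h(\eta_u(H)) = \eta_s(f) = \rho_{s,f}(\eta)$, so $h = \rho_{s,f}$ and $\rho$ is unique. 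This reliance on parametricity is exactly the structural assumption whose clash with Cantor's theorem is the content of Reynolds' paradox, and is what the rest of the section dissects.

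Finally, for the second bullet I would invoke Lambek's lemma using the initiality just established. The map $H : (Tu, TH) \to (u,H)$ is an algebra morphism, so the composite $H \circ \rho_0 : (u,H) \to (u,H)$ is an endomorphism of the initial algebra and hence equals $\mathrm{id}_u$ by uniqueness. Moreover, since $\rho_0$ is an algebra morphism, $\rho_0 \circ H = (TH) \circ T(\rho_0) = T(H \circ \rho_0) = T(\mathrm{id}_u) = \mathrm{id}_{Tu}$. Thus $H$ and $\rho_0$ are mutually inverse, giving the isomorphism $Tu \cong u$.
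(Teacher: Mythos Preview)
The paper does not prove this lemma; it attributes the statement to Reynolds~\cite{model_not_set} and then, in the subsection ``Avoiding the Paradox'', explains why Reynolds' construction does not apply to the model built here. So there is no in-paper proof to compare against directly.

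Your outline matches Reynolds' strategy: interpret the impredicative Church type $W = \forall X.\,(TX\to X)\to X$ to obtain $u$, define $H$ and the folds, check the square commutes, and finish with Lambek's lemma. The commutativity calculation and the Lambek step are correct. The gap is exactly where you locate it: uniqueness of $\rho$. You write that you ``would supply'' the parametricity facts (i) and (ii) and then proceed ``granting these'', but you never actually supply them, and they are \emph{not} derivable from the bare product interpretation you have set up --- as you yourself note, that interpretation admits non-natural junk families. Reynolds does not obtain these facts by an internal free-theorem argument; they are consequences of his axioms on what counts as a set-theoretic model, which force the interpretation of a polymorphic term to be a natural family (equivalently, force $\jump{W}$ to behave like the categorical limit rather than the raw product). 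Without stating and invoking that hypothesis explicitly, your uniqueness argument does not go through. This is not a cosmetic point: the paper's whole purpose in Section~\ref{reynolds} is that its interpretation of $\mathrm{Prop}$ violates precisely this hypothesis, since $\jump{(P\to B)\to B}$ is an open set rather than a function space, so Reynolds' term of type $((P\to B)\to B)\to P$ no longer yields an element of $\mathrm{Hom}(Tu,u)$. To close the gap you should either state Reynolds' naturality assumption on the model as an explicit hypothesis, or simply cite his Lemma~2 for the construction, as the paper does.
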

By definition of endofunctor $T$, $\sharp \jump{B} \geq 2$ implies $Tu$ and $u$ have different cardinalities in spite of $Tu$ and $u$ being isomorphism.
Therefore, the existence of a type $B$ of impredicative sort such that $\sharp\jump{B} \geq 2$ causes a paradox.

\subsection{Avoiding the Paradox}
In ECC, we have an impredicative sort $\mathrm{Prop}$,
and there is a type $\mathrm{B}$ of $\mathrm{Prop}$ such that $\sharp\jump{B} \geq 2$.
However, this doesn't cause a paradox.
In fact, to prove the existence of a function $H \in Tu \rightarrow u$, Reynolds constructs a term $t$ of the type $((P \rightarrow B) \rightarrow B) \rightarrow P$ in the proof of lemma 2 in~\cite{model_not_set}, where $P$ is a type such that $\jump{P} = u$.
If $\jump{(P \rightarrow B) \rightarrow B}$ were interpreted as a set theoretical function space, it would cause a paradox in cardinality since $(P \rightarrow B) \rightarrow B \; \cong \; P$ by Lemma \ref{paradox} and $\sharp \jump{B} \geq 2$.
However in our model $\jump{(P \rightarrow B) \rightarrow B}$ is not a function space, i.e. it is not $(\jump{P} \rightarrow \jump{B}) \rightarrow \jump{B}$, but just some open set of $(X, \mathcal{O}(X))$:
\begin{equation*}
  \jump{(P \rightarrow B) \rightarrow B} = \jump{B}^{\jump{B}^{\jump{P}}} \in \mathcal{O}(X)
\end{equation*}
since both $P$ and $B$ are propositional terms.
% On the other hand, Reynalds' model gives a set theoretical function 
Thus this discussion moves to the Heyting algebra part of the model where we need not fear such paradox.

\section{Future Work}\label{future}
%There are still two remaining questions we would like to answer in the future: whether the $point \; condition$ is really needed to prove soundness, and how close to completeness is our model.
There are still three remaining questions we would like to answer in the future: whether the $point \; condition$ is really needed to prove soundness, whether we can handle full ECC, without our restrictions on the type system, and how close to completeness is our model.

The point condition is very restrictive.
It seems to allow only discrete models.
Hence we would like to remove it to allow a wider variety of models.
In fact we have not found any counterexample when removing the $point \; condition$, up to now.
\par
We would also like to lift the restrictions on the PI-Type rule, which prohibits statements about proofs, and on the subtyping rule.
They come from the fact that, in the interpretation of contexts, we use the strict interpretation, which restricts all propositional terms to either $\phi$ or the singleton $\{p\}$, so that we cannot build an element when the non-strict interpretation, while being non-empty, does not contain $p$.
We are considering several approaches to overcome this problem.
\par
While this model rejects the excluded middle, it still admits proof-irrelevance%The following condition% called {\em proof-irrelevance}%this model is still not complete, in that the equation %as $proof$-$irrelevance$.
\begin{equation*}
  \forall t_1, t_2, (t_1, t_2 \; is \; proof \; term \; for \; \Gamma) \Rightarrow \jump{\Gamma \vdash t_1}(\gamma) = \jump{\Gamma \vdash t_2}(\gamma).
\end{equation*}
%This condition is called {\em proof-irrelevance}.
Since the existence of $t$ such that following condition
\begin{equation*}
  \Gamma ; (p_1 : P) ; (p_2 : P) \vdash t : p_1 =_P p_2 \quad (where \; \Gamma \vdash P : \mathrm{Prop} \; is \; derivable)
\end{equation*}
holds is not provable in general, this means that our model does not still reach completeness.
We are now investigating how close to completeness it is.

\appendix
\section{Proof of Soundness}
\begin{proof}[Proof of 2 of Theorem \ref{soundness}]\label{soundness_ind}
  We assume that $p$ is a reference point.
  \begin{enumerate}
  \item Case of Axiom \\
    $\jump{\Gamma \vdash \mathrm{Prop}}(\gamma) \in \jump{\Gamma \vdash \mathrm{Type}_i}(\gamma)$ is clear. Similarly, $\jump{\Gamma \vdash \mathrm{Type}_i}(\gamma) \in \jump{\Gamma \vdash \mathrm{Type}_{i+1}}(\gamma)$ is also clear.
    
  \item Case of Subtyping \\
    The fact that $\jump{\Gamma \vdash A}(\gamma) \in \jump{\Gamma \vdash \mathrm{Type}_i}(\gamma)$ implies $\jump{\Gamma \vdash A}(\gamma) \in \jump{\Gamma \vdash \mathrm{Type}_{i+1}}(\gamma)$ is clear. 

%    The fact that $\jump{\Gamma \vdash A}(\gamma) \in \jump{\Gamma \vdash \mathrm{Prop}}(\gamma)$ implies $\jump{\Gamma \vdash A}(\gamma) \in \jump{\Gamma \vdash \mathrm{Type}_i}(\gamma)$ is clear. Similarly the fact that $\jump{\Gamma \vdash A}(\gamma) \in \jump{\Gamma \vdash \mathrm{Type}_i}(\gamma)$ implies $\jump{\Gamma \vdash A}(\gamma) \in \jump{\Gamma \vdash \mathrm{Type}_{i+1}}(\gamma)$ is also clear.

  \item Case of PI-Type \\
    We will show the fact that
    \begin{eqnarray*}
      && \bigl(\forall \gamma, \alpha, \jump{\Gamma \vdash A}(\gamma) \in \jump{\Gamma \vdash s_1}(\gamma) \\
      && \quad \quad and \; \jump{\Gamma ; (x : A) \vdash B}(\gamma, \alpha) \in \jump{\Gamma ; (x : A) \vdash s_2}(\gamma, \alpha)) \bigr) \\
      && \Rightarrow \; \forall \gamma, \jump{\Gamma \vdash \forall x:A.B}(\gamma) \in \jump{\Gamma \vdash s_3}(\gamma).
    \end{eqnarray*}
    There are three cases as follows.
    \begin{itemize}
    \item $PT_{\Gamma, x}(A,B) = T$\\
      By definition of the interpretation of judgment, the following equation
      \begin{equation*}
        \jump{\Gamma \vdash \forall x : A.B}(\gamma) = \prod_{\alpha \in \jump{\Gamma \vdash A}'(\gamma)}\jump{\Gamma ; (x : A) \vdash B}(\gamma, \alpha)
      \end{equation*}
      holds.
      There are the following two cases:
      \begin{itemize}
      \item $A$ is not a propositional term for $\Gamma$ \\
        Since $\jump{\Gamma \vdash A}(\gamma) \in \mathscr{U}(i)$ , $\jump{\Gamma ; (x : A) \vdash B}(\gamma, \alpha) \in \mathscr{U}(j)$ for any $\gamma, \alpha$ and Lemma \ref{prod_univ}, we have
        \begin{equation*}
          \displaystyle\prod_{\alpha \in \jump{\Gamma \vdash A}(\gamma)} \jump{\Gamma ; (x : A) \vdash B}(\gamma, \alpha) \in \mathscr{U}(\max(i,j)).
        \end{equation*}
      \item $A$ is a propositional term for $\Gamma$ \\
        Since $\jump{\Gamma \vdash A}'(\gamma) \in \mathcal{U}(j)$ , $\jump{\Gamma ; (x : A) \vdash B}(\gamma, \alpha) \in \mathscr{U}(j)$ for any $\gamma, \alpha$ and Lemma \ref{prod_univ}, we have
        \begin{equation*}
          \displaystyle\prod_{\alpha \in \jump{\Gamma \vdash A}'(\gamma)} \jump{\Gamma ; (x : A) \vdash B}(\gamma, \alpha) \in \mathscr{U}(j).
        \end{equation*}
      \end{itemize}
      Hence, the statement holds.

    \item $PT_{\Gamma, x}(A,B) = \TP$ \\
      It is clear since $\jump{\Gamma \vdash \forall x : A.B}(\gamma)$ is an open set by definition of the interpretation of judgment.
    \item $PT_{\Gamma, x}(A,B) = \PP$ \\
      It is clear since $\jump{\Gamma \vdash \forall x : A.B}(\gamma)$ is an open set by definition of the interpretation of judgment.

    \end{itemize}

  \item Case of Abstraction \\
    We will show the fact that
    \begin{eqnarray*}
      && \bigl(\forall \gamma, \alpha, \jump{\Gamma ; (x : A) \vdash t}(\gamma, \alpha) \in \jump{\Gamma ; (x : A) \vdash B}(\gamma, \alpha) \\
      && \quad and \; \jump{\Gamma \vdash \forall x : A.B}(\gamma) \in \jump{\Gamma \vdash s}(\gamma) \bigr) \\
      && \Rightarrow \; \forall \gamma, \jump{\Gamma \vdash \lambda x : A.t}(\gamma) \in \jump{\Gamma \vdash \forall x : A.B}(\gamma).
    \end{eqnarray*}
    There are three cases as follows.
    \begin{itemize}
    \item $PT_{\Gamma, x}(A,B)=T$ \\
      %The equation $s = \mathrm{Type}_i$ must hold for some $i$ in this case.
      By definition of the interpretation, we have the following equations:
      \begin{eqnarray*}
        \jump{\Gamma \vdash \lambda x : A.t}(\gamma) &=& \Bigl\{\bigl(\alpha, \jump{\Gamma ; (x : A) \vdash t}(\gamma, \alpha) \bigr) \; | \; \alpha \in \jump{\Gamma \vdash A}'(\gamma) \Bigr\}, \\
        \jump{\Gamma \vdash \forall x : A.B}(\gamma) &=& \prod_{\alpha \in \jump{\Gamma \vdash A}'(\gamma)}\jump{\Gamma ; (x : A) \vdash B}(\gamma, \alpha). %\\
%        &=& \Bigl\{ f \subset \coprod_{a \in \jump{\Gamma \vdash A}(\gamma)}\jump{\Gamma; (x : A) \vdash B(a)}(\gamma, \alpha) \; | \; \forall a \in \jump{\Gamma \vdash A}(\gamma), \exists ! b, (a, b) \in f \Bigr\}
      \end{eqnarray*}
      Then, we must prove the following equation:
      \begin{eqnarray*}
        \Bigl\{\bigl(\alpha, \jump{\Gamma ; (x : A) \vdash t}(\gamma, \alpha) \bigr) \; | \; \alpha \in \jump{\Gamma \vdash A}'(\gamma) \Bigr\} \in \prod_{\alpha \in \jump{\Gamma \vdash A}'(\gamma)}\jump{\Gamma ; (x : A) \vdash B}(\gamma, \alpha).%        qquad \qquad \in \Bigl\{ f \subset \coprod_{a \in \jump{\Gamma \vdash A}(\gamma)}B(a) \; | \; \forall a \in \jump{\Gamma \vdash A}(\gamma), \exists! b, (a,b) \in f \Bigr\}
      \end{eqnarray*}
      But it is clear\footnote{
        If $\jump{\Gamma \vdash A}'(\gamma)$ is the empty set, then $\jump{\Gamma \vdash \forall x :A.B}(\gamma) = \{ \phi \}$ and $\jump{\Gamma \vdash \lambda x : A.t}(\gamma) = \phi$ hold.} by induction of hypothesis.
    \item $PT_{\Gamma, x}(A,B) = \TP$ \\
      Since $\lambda x : A.t$ is a proof term, we have following equations
      \begin{equation*}
        \jump{\Gamma \vdash \lambda x : A.t}(\gamma) = p.
      \end{equation*}
      Hence, the fact we must prove is 
      \begin{equation*}
        p \in \jump{\Gamma \vdash \forall x : A.B}(\gamma).
      \end{equation*}
      By definition we have the following equation.
      \begin{equation*}
        \jump{\Gamma \vdash \forall x : A.B}(\gamma) = \bigsqcap\{ \jump{\Gamma ; (x : A) \vdash B}(\gamma, \alpha) \; | \; \alpha \in \jump{\Gamma \vdash A}(\gamma) \}.
      \end{equation*}
      If $\jump{\Gamma \vdash A}(\gamma)$ is the empty set, then the statement holds since $\jump{\Gamma \vdash \forall x : A.B}(\gamma) = X$ by Lemma~\ref{heyting_conditions}~(\ref{eq:whole}).
      We assume that $\jump{\Gamma \vdash A}(\gamma)$ is a non-empty set.
      We have
      \begin{equation*}
        \forall \alpha \in \jump{\Gamma \vdash A}(\gamma), p \in \jump{\Gamma ; (x : A) \vdash B}(\gamma, \alpha).
      \end{equation*}
      since $\jump{\Gamma ; (x : A) \vdash t}(\gamma, \alpha) = p$.
      Therefore, we have the following equation:
      \begin{equation*}
        p \in \bigcap\{ \jump{\Gamma ; (x : A) \vdash B}(\gamma, \alpha) \; | \; \alpha \in \jump{\Gamma \vdash A}(\gamma) \}.
      \end{equation*}
      However $\bigsqcap S \neq \bigcap S$ hold in general,
      since $\bigsqcap S$ is the interior of $\bigcap S$ when $S$ is non empty subset of $X$.
      Now, we apply the point condition here\footnote{
        This is the place we need it in the proof.
      }.
        We have
      \begin{eqnarray*}
        \jump{\Gamma \vdash \forall x : A. B}(\gamma) &=& \bigsqcap \{ \jump{\Gamma ; (x : A) \vdash B}(\gamma, \alpha) \; | \; \alpha \in \jump{\Gamma \vdash A}(\gamma) \} \\
        &=& \bigcap \{ \jump{\Gamma ; (x : A) \vdash B}(\gamma, \alpha) \; | \; \alpha \in \jump{\Gamma \vdash A}(\gamma) \}
      \end{eqnarray*}
      since $\bigcap\{ \jump{\Gamma ; (x : A) \vdash B}(\gamma, \alpha) \; | \; \alpha \in \jump{\Gamma \vdash A}(\gamma) \}$ is an open set by the point condition.
      Hence, the condition holds in this case.

    \item $PT_{\Gamma, x}(A,B) = \PP$ \\
      Since $\lambda x : A.B$ is a proof term, we have the following equation
      \begin{equation*}
        \jump{\Gamma \vdash \lambda x : A.t}(\gamma) = p.
      \end{equation*}
      Hence, the fact we must prove is 
      \begin{equation*}
        p \in \jump{\Gamma \vdash \forall x : A.B}(\gamma)
      \end{equation*}
      By definition of the interpretation of judgment, we have
      \begin{equation*}
        \jump{\Gamma \vdash \forall x : A.B}(\gamma) = \biggl(\jump{\Gamma \vdash B}(\gamma)\biggr)^{\jump{\Gamma \vdash A}(\gamma)}.
      \end{equation*}
      By characteristic of Heyting algebra,
      \begin{equation*}
        \jump{\Gamma \vdash B}(\gamma) \subset \jump{\Gamma \vdash \forall x : A.B}(\gamma).
      \end{equation*}
      By induction hypothesis $p \in \jump{\Gamma \vdash B}(\gamma)$, so that the condition holds in this case.
    \end{itemize}

  \item Case of Apply \\
    We will show the fact that
    \begin{eqnarray*}
      && \bigl(\forall \gamma, \jump{\Gamma \vdash u}(\gamma) \in \jump{\Gamma \vdash \forall x : A.B}(\gamma) \; and \; \jump{\Gamma \vdash v}(\gamma) \in \jump{\Gamma \vdash A}(\gamma) \bigr) \\
      && \Rightarrow \; \forall \gamma, \jump{\Gamma \vdash u \; v}(\gamma) \in \jump{\Gamma \vdash B[x \backslash v]}(\gamma).
    \end{eqnarray*}
    There are three cases as follows.
    \begin{itemize}
    \item $PT_{\Gamma, x}(A,B) = \T$ \\
      By definition of the interpretation of judgment, the following equation
      \begin{eqnarray*}
        \jump{\Gamma \vdash u \; v}(\gamma) &=& \jump{\Gamma \vdash u}(\gamma) \bigl(\jump{\Gamma \vdash v}(\gamma) \bigr) \\
        \jump{\Gamma \vdash u}(\gamma) &\in& \prod_{\alpha \in \jump{\Gamma \vdash A}'(\gamma)} \jump{\Gamma ; (x : A) \vdash B}(\gamma, \alpha)
      \end{eqnarray*}
      holds. Therefore, we have
      \begin{equation*}
        \jump{\Gamma \vdash u \; v}(\gamma) \in \jump{\Gamma ; (x : A) \vdash B}(\gamma, \jump{\Gamma \vdash v}(\gamma)).
      \end{equation*}
      By Lemma \ref{substitution_interpretation}, we have
      \begin{equation*}
        \jump{\Gamma ; (x : A) \vdash B}(\gamma, \jump{\Gamma \vdash v}(\gamma)) = \jump{\Gamma \vdash B[x \backslash v]}(\gamma).
      \end{equation*}
      Hence, the statement holds in this case.

    \item $PT_{\Gamma, x}(A,B) = \TP$ \\
      It suffices to show that $p \in \jump{\Gamma \vdash B[x \backslash v]}(\gamma)$, since $\jump{\Gamma \vdash u}(\gamma) = \jump{\Gamma \vdash u \; v}(\gamma) = p$ holds.
      By induction hypothesis, we have the following equation
      \begin{equation*}
        p \in \bigsqcap\{ \jump{\Gamma ; (x : A) \vdash B}(\gamma, \alpha) \; | \; \alpha \in \jump{\Gamma \vdash A}(\gamma) \}.
      \end{equation*}
      This equation implies the fact that
      \begin{equation*}
        \forall \alpha \in \jump{\Gamma \vdash A}(\gamma), p \in \jump{\Gamma ; (x : A) \vdash B}(\gamma, \alpha).
      \end{equation*}
      By Lemma \ref{substitution_interpretation} and the fact $\jump{\Gamma \vdash v}(\gamma) \in \jump{\Gamma \vdash A}(\gamma)$, we have
      \begin{equation*}
        p \in \jump{\Gamma \vdash B [x \backslash v]}(\gamma).
      \end{equation*}
      Hence, the statement holds in this case.

    \item $PT_{\Gamma, x}(A,B) = \PP$ \\
      It suffices to show that $p \in \jump{\Gamma \vdash B}(\gamma)$, since $\jump{\Gamma \vdash u}(\gamma) = \jump{\Gamma \vdash v}(\gamma) = \jump{\Gamma \vdash u \; v}(\gamma) = p$ holds and the variable $x$ does not appear freely in $B$.
      The following equation holds.
      \begin{equation*}
        \jump{\Gamma \vdash \forall x : A.B}(\gamma) = \biggl(\jump{\Gamma \vdash B}(\gamma) \biggr)^{\jump{\Gamma \vdash A}(\gamma)}    
      \end{equation*}
      By definition of Heyting algebra, we have
      \begin{equation*}
        \jump{\Gamma \vdash \forall x : A.B}(\gamma) \cap \jump{\Gamma \vdash A}(\gamma) \subset \jump{\Gamma \vdash B}(\gamma).
      \end{equation*}
      Then we have
      \begin{equation*}
        p \in \jump{\Gamma \vdash B}(\gamma).
      \end{equation*}
      by lemma \ref{substitution_interpretation}.
      Hence, the statement holds in this case.
    \end{itemize}

  \item Case of Variable \\
    We must show that
    \begin{eqnarray*}
      &&\bigl( (x : A) \in \Gamma \quad and \quad \forall \gamma, \jump{\Gamma \vdash A}(\gamma) \in \jump{\Gamma \vdash s}(\gamma) \bigr) \\
      && \Rightarrow \forall \gamma, \jump{\Gamma \vdash x}(\gamma) \in \jump{\Gamma \vdash A}(\gamma).
    \end{eqnarray*}
    It is clear by definition of $\jump{\Gamma}$.

  \item Case of Beta Equality \\
    We must show that
    \begin{eqnarray*}
      && \bigl(\forall \gamma, \jump{\Gamma \vdash x}(\gamma) \in \jump{\Gamma \vdash A}(\gamma) \; and \; A =_\beta B \bigr) \\
      && \Rightarrow \forall \gamma, \jump{\Gamma \vdash x}(\gamma) \in \jump{\Gamma \vdash B}(\gamma).
    \end{eqnarray*}
    It is clear by Theorem\ref{soundness} (1).
  \end{enumerate}
\end{proof}

%% \section{Morbi eros magna}

%% Morbi eros magna, vestibulum non posuere non, porta eu quam. Maecenas vitae orci risus, eget imperdiet mauris. Donec massa mauris, pellentesque vel lobortis eu, molestie ac turpis. Sed condimentum convallis dolor, a dignissim est ultrices eu. Donec consectetur volutpat eros, et ornare dui ultricies id. Vivamus eu augue eget dolor euismod ultrices et sit amet nisi. Vivamus malesuada leo ac leo ullamcorper tempor. Donec justo mi, tempor vitae aliquet non, faucibus eu lacus. Donec dictum gravida neque, non porta turpis imperdiet eget. Curabitur quis euismod ligula. 

%%
%% Bibliography
%%

%% Either use bibtex (recommended), but commented out in this sample

%\bibliography{dummybib}

%% .. or use bibitems explicitely

\end{document}